\newtheorem{remark}{Remark}
\newtheorem{lemma}{Lemma}
\newtheorem{example}{Example}
\newtheorem{definition}{Definition}
\newtheorem{proposition}{Proposition}
\newcommand{\set}[1]{\mathbb{#1}}
\newcommand{\N}{{\set{N}}}
\newcommand{\R}{{\set{R}}}
\newcommand{\E}{{{\cal E}}}
\newcommand{\V}{{{\cal V}}}
\newcommand{\G}{{{\cal G}}}
\renewcommand{\H}{{{\cal H}}}
\newcommand{\C}{{{\cal C}}}
\title{Distributed Nonlinear Conic Optimisation with partially separable Structure}
\author{Richard Heusdens  and Guoqiang Zhang

\thanks{R.\ Heusdens is with the Netherlands Defence Academy (NLDA), the Netherlands, and with the Faculty of Electrical Engineering, Mathematics and Computer Science, Delft University of Technology, Delft, the Netherlands (email: r.heusdens@\{mindef.nl,tudelft.nl\}).
}
\thanks{G.\ Zhang is with the University of Exeter, Exeter, United Kingdom (email: g.z.zhang@exeter.ac.uk)}
}
\begin{document}






\maketitle

\begin{abstract}
In this paper we consider the problem of distributed nonlinear optimisation of a separable convex cost function over a graph subject to cone constraints. We show how to generalise, using convex analysis, monotone operator theory and fixed-point theory, the primal-dual method of multipliers (PDMM), originally designed for equality constraint optimisation and recently extended to include linear inequality constraints, to accommodate for cone constraints. The resulting algorithm can be used to implement a variety of optimisation problems, including the important class of semidefinite programs with partially separable structure, in a fully distributed fashion. We derive update equations by applying the Peaceman-Rachford splitting algorithm to the monotonic inclusion related to the lifted dual problem. The cone constraints are implemented by a reflection method in the lifted dual domain where auxiliary variables are reflected with respect to the intersection of the polar cone and a subspace relating the dual and lifted dual domain. Convergence results for both synchronous and stochastic update schemes are provided and an application of the proposed algorithm is demonstrated to implement an approximate algorithm for maximum cut problems based on semidefinite programming in a fully distributed fashion.
\end{abstract}

\begin{keywords}
Distributed optimisation, nonlinear optimisation, cone constraints, primal-dual method of multipliers
\end{keywords}


\section{Introduction}

In the last decade, distributed optimisation \cite{boy:11} has drawn increasing attention due to the demand for either distributed signal processing or massive data processing over (large-scale) pear-to-pear networks of ubiquitous devices. 
Motivated by the increase in computational power of low cost micro-processors, the range of applications for these networks has grown rapidly. Examples include training a machine learning model, target localisation and tracking, healthcare monitoring, power grid management, and environmental sensing. Consequently, there is a desire to exploit the on-node computational capabilities of such networks to parallelise or even fully distribute computation.
In comparison to centralised counterparts, distributed networks offer several unique advantages, including robustness to node failure, scalability with network size and localised transmission requirements.

In general, the typical challenges faced by distributed optimisation over a network, in particular ad-hoc networks, are the lack of infrastructure, limited connectivity, scalability, data heterogeneity across the network,  data-privacy requirements, and heterogeneous computational resources \cite{dim:10,boy:11}. 
Various approaches have been developed to address one or more challenges within the network under consideration, depending on the applications involved. For example, \cite{boy:06,ust:10} introduced a pairwise gossip technique to enable asynchronous message exchange in the network, while \cite{ben:10b} discusses a hybrid approach combining gossip and geographic routing. In \cite{lut:13b}, a broadcast-based distributed consensus method was proposed to save communication energy. Alternatively, \cite{wai:05,sch:11} presents a belief propagation/message passing strategy, and \cite{lou:15,isu15} 
explores signal processing on graphs. Data-privacy requirements were addressed in \cite{li:20, li:21, Li:20sp, li:24}, while \cite{jon:18} considered data quantisation resulting in  communication efficient algorithms.

A method that holds particular significance in this study is to approach distributed signal processing by linking it with convex optimisation, as it has been demonstrated that numerous traditional signal processing problems can be reformulated into an equivalent convex form \cite{luo:06}. These methods model the problem at hand as a convex optimisation problem and solve the problem using standard solvers like dual ascent, the method of multipliers or ADMM \cite{boy:11} and PDMM \cite{zha:12,she:19}. While the solvers ADMM and PDMM may initially appear distinct due to their differing derivations, they are closely interconnected \cite{she:19}. The derivation of PDMM, however, directly leads to a distributed implementation where no direct collaboration is required between nodes during the computation of the updates. For this reason we will take the PDMM approach to derive update rules for distributed optimisation with general  cone constraints.

Let $G = ({\cal V},\mathcal{E})$ be an undirected graph
 where ${\cal V}$ is the set of vertices representing the nodes/agents/participants in the network, and $\mathcal{E}$ is the set of (undirected) edges, representing the communication links in the network. 
 PDMM  was originally designed to solve the following separable convex optimisation problem
\begin{align}
\begin{array}{ll} \text{minimise} & {\displaystyle \sum_{i\in {\cal V}} f_i(x_i)} \\\rule[4mm]{0mm}{0mm}
\text{subject to} & (\forall \{i,j\}\in \E) \,\,\, A_{ij}x_i + A_{ji}x_j = b_{ij}, \label{equ:linearCond}
\end{array}
\end{align}
in a synchronous setting. Recently, convergence result were presented for stochastic PDMM update schemes \cite{jor:23}, a general framework that can model variations such as asynchronous PDMM and PDMM with transmission losses.  In \cite{zha:22}, PDMM is modified for federated learning over a centralised network, where it is found that PDMM is closely related to the SCAFFOLD \cite{kar:20} and FedSplit \cite{pat:20} algorithm. Additionally, PDMM can be employed for privacy-preserving distributed optimisation, providing a level of privacy assurance, by utilising the fact that the (synchronous) PDMM updates take place within a particular subspace and the orthogonal complement can be used  to obscure local (private) data, a technique known as subspace perturbation \cite{li:20, li:21, Li:20sp, li:24}. Additionally, research in \cite{jon:18} demonstrates that PDMM exhibits robustness against data quantisation.
Recently, the PDMM algorithm has been extended to incorporate affine inequality constraints as well \cite{heu:24}.
This enhancement enables its application in solving linear programs in a distributed fashion.

Even though a large class of problems can be cast as a linear program, there is a growing interest in 
semidefinite programming, a relatively new field of optimisation. Semidefinite programming unifies several standard problems (e.g., linear and quadratic programming) and finds many applications in engineering and combinatorial optimisation.
In addition, semidefinite relaxation has been at the forefront of some highly promising advancements in the realms of signal processing, communications and smart grids. Its significance and relevance has been demonstrated across a variety of applications,  like sensor network localisation \cite{bis:06,so:07,sim:13}, multiple-input, multiple-output (MIMO) detection \cite{tan:01,ma:02,mob:07} and optimal power flow \cite{dal:13,liu:19}. The computational complexity associated with solving such problems typically grows unfavorably with the number of optimization variables (at least ${\cal O}(n^3)$) and/or the dimension of the semidefinite constraints involved. This poses limitations on the capability to solve large semidefinite programming instances. However, it is sometimes possible to reduce the complexity by exploiting possible structures in the problem such as sparsity or separability.
As a consequence, exploring distributed algorithms for semidefinite programming has received much research interest recently \cite{fuk:01,dal:13,sim:13,pak:18}.

\subsection{Main contribution}
In this paper we present a general framework for solving nonlinear convex optimisation problems with cone constraitns.  The framework has linear programming (LP), (convex) quadratic programming (QP), (convex) quadratically constrained quadratic programming (QCQP) and second-order cone programming (SOCP) as special cases. The resulting algorithms are fully distributed, in the sense that no direct collaboration is required between nodes during the computations, leading to an attractive (parallel) algorithm for optimisation in practical networks. To incorporate cone constraints, we impose polar cone constraints on the associated dual variables and then, inspired by \cite{she:19,heu:24}, derive closed-form update expressions for the dual variables via Peacheman-Rachford splitting of the monotonic inclusion related to the lifted dual problem. We perform a convergence analysis for both synchronous and stochastic update schemes, the latter based on stochastic coordinate descent.   

\subsection{Organisation of the paper}

The remainder of this paper is organized as follows. Section~\ref{sec:pre} introduces appropriate nomenclature and reviews  properties of monotone operators and operator splitting techniques. Section~\ref{sec:probl}  describes the problem formulation while Section~\ref{sec:operator} introduces a monotone operator derivation of PDMM with cone constraints. In Section~\ref{sec:convergence} we derive convergence results of the proposed algorithm and in Section~\ref{sec:stoch} we consider a stochastic updating scheme. Finally, Section~\ref{sec:num} describes applications and numerical experiments obtained by computer simulations to verify and substantiate the underlying claims of the document and the final conclusions are drawn in Section~\ref{sec:conclusion}.


\section{Preliminaries}
\label{sec:pre}
There exist many algorithms for iteratively minimising a convex function.
It is possible to derive and analyse many of these algorithms in a unified manner, using the abstraction of monotone operators.
In this section we will review some properties of monotone operators and operator splitting techniques that will be used throughout this manuscript. For a primer on monotone operator methods, the reader is referred to the self-contained introduction and tutorial \cite{ryu:16}. For an in-depth discussion on the topic the reader is referred to \cite{bau:17}.

\subsection{Notations and functional properties}

In this work we will denote by $\N$ the set of nonnegative integers, by $\R$ the set of real numbers, 
by $\R^n$ the set of real column vectors of length $n$, by $\R^{m\times n}$ the set of $m$ by $n$ real matrices, and by $S^n$ the set of $n$ by $n$ real symmetric matrices. We will denote by $\H$ a real Hilbert space with scalar (or inner) product $\langle \cdot \,,\cdot \rangle$ and induced norm $\|\cdot \|$; $(\forall x\in\H) \;  \|x\| = \sqrt{\langle x,x\rangle}$. The dimension of $\H$ is indicated by ${\rm dim}\,\H$.
The Hilbert direct sum of a family of real Hilbert spaces $(\H_i, \|\cdot\|_i)_{i\in{\cal I}}$, where $\cal I$ is a directed set, is the real Hilbert space $\bigoplus_{i\in{\cal I}}\H_i = \{ x = (x_i)_{i\in{\cal I}} \in \bigtimes_{i\in{\cal I}}\H_i \,|\, \sum_{i\in{\cal I}} \|x_i\|_i^2 < +\infty\}$, equipped with the addition $(x,y) \mapsto (x_i+y_i)_{i\in{\cal I}}$, scalar multiplication $(\alpha,x) \mapsto (\alpha x_i)_{i\in{\cal I}}$ and scalar  product $(x,y) \mapsto \sum_{i\in{\cal I}} \langle x_i,y_i\rangle_i$, where $\langle \cdot\,,\cdot \rangle_i$ denotes the scalar  product of $\H_i$. If $\H$ and $\G$ are real Hilbert spaces, we set $\C(\H,\G) = \{ T: \H\to\G \;|\; T \text{ is linear and continuous}\}$. Let $\R_{++} = \{c\in\R \,|\, c>0\}$.
A subset $K$ of $\H$ is a cone if $(\forall x\in K)(\forall c\in\R_{++})\, cx\in K$.  Important examples of cones are $\R^n_+  = \{ x\in\R^n \,|\, x_i \geq 0, \, i=1,\ldots,n\}$, the nonnegative orthant, and $S^n_+$, the positive semidefinite cone in $\R^{n\times n}$. 
If $\H=\bigoplus_{i\in{\cal I}}\H_i$ and $K_i$ is a cone in $\H_i$, then $\bigtimes_{i\in{\cal I}}K_i$ is a cone in $\H$.
The dual cone of $K$ is given by $K^* = \{x\in\H \,|\, (\forall y\in K) \; \langle x,y\rangle \geq 0\}$ and the polar cone of $K$ is $K^\circ = -K^*$.  Let $K$ be a nonempty subset of $\H$, and let $x\in\H$. We denote by $P_Kx$ the projection of $x$ onto $K$: $(\forall x\in\H)(\forall y\in K) \,\, \langle y-P_Kx,x- P_Kx\rangle \leq 0$. If $K$ is a nonempty closed convex cone in $\H$, then $x\in\H$ admits the conic decomposition $x = P_Kx + P_{K^\circ}x$ where $P_{K}x \perp P_{K^\circ}x$ \cite{mor:62}.
We call $K$ self-dual if $K^*=K$. Both $\R^n_+$ and $S^n_+$ are self-dual. 
Let $K$ be a proper\footnote{A cone $K\subseteq \H$ is called proper if it is closed, convex, solid and $K \cap (-K) = \{0\}$ ($K$ is pointed).} cone in $\H$. We associate with a proper cone $K$ a partial ordering on $\H$ defined by $(\forall x\in\H)(\forall y\in\H)\;\; x \preceq_K \!y \Leftrightarrow y-x\in K$. We also write $x \succeq_K \!y$ for $y \preceq_K \!x$.
As an example, when $K=\R_+$, the partial ordering $\preceq_K$ is the usual ordering $\leq$ on $\R$. If $K=S^n_+$,  the partial ordering $\preceq_K$ is the usual matrix inequality: $X \preceq_K \! Y$ means $Y-X$ is positive semidefinite. 

When $x$ is updated iteratively, we write $x^{(k)}$ to indicate the update of $x$ at the $k$th iteration. When we consider $x^{(k)}$ as a realisation of a random variable, the corresponding random variable will be denoted by $X^{(k)}$ (corresponding capital). The expectation operator is denoted by $\mathbb{E}$. 

Let ${\cal X,Y}$ be nonempty sets, and let $2^{\cal Y}$ be the power set of $\cal Y$, i.e., the family of all subsets of $\cal Y$. A set valued operator $T:{\cal X}\to 2^{\cal Y}$ is defined by its graph ${\rm gra}\,T = \{ (x,y)\in {\cal X}\times{\cal Y} \,|\, y \in Tx\}$. The domain of $T$ is ${\rm dom}\,T= \{x\in {\cal X} \,|\, Tx \neq \emptyset\}$. The kernel and range of $T$ are ${\rm ker}\,T = \{x\in {\cal X} \,|\, Tx =0\}$ and ${\rm ran}\,T = T(\cal X)$, respectively. The identity operator on $\H$ is denoted by $I$. If $Tx$ is a singleton or empty for any $x$, then $T$ is a function or single-valued,  usually denoted by $f$. 
The notion of the inverse of $T$, denoted by $T^{-1}$, is also defined through its graph, ${\rm gra}\,T^{-1} = \{ (y,x)\in {\cal Y}\times{\cal X} \,|\, y \in Tx\}$. Let $c\in\R_{++}$. We denote by $J_{cT} = (I + cT)^{-1}$ the resolvent of an operator $T$ and by $R_{cT} = 2J_{cT} - I$ the  reflected resolvent, sometimes referred to as the Cayley operator.  
The composition of two operators $T_1:{\cal X}\to 2^{\cal Y}$ and $T_2:{\cal Y}\to 2^{\cal Z}$ is given by $T_2\circ T_1 : {\cal X}\to 2^{\cal Z}$.  
The set of fixed points of $T$ is denoted by ${\rm fix}\,T = \{ x\in {\cal X} \,|\, Tx=x\}$.

Functional transforms make it possible to investigate problems from a different perspective and sometimes simplify the analysis. In convex analysis, a suitable transform is the Legendre transform, which maps a function to its Fenchel conjugate. We will denote by $\Gamma_0(\H)$ the set of all closed, proper, and convex (CCP) functions $f : \H \to \R \cup \{+\infty\}$.
The Fenchel conjugate of a function $f \in\Gamma_0(\H)$ is defined as $f^{\ast}(y) = \sup_x \left( \langle y,x \rangle -f(x) \right)$. The function $f$ and its conjugate $f^*$ are related by the Fenchel-Young inequality $f(x) + f^*(y) \geq \langle y,x \rangle$ \cite[Proposition 13.15]{bau:17}.   We will denote  by $\partial f$ the subdifferential of $f$. If $f\in \Gamma_0(\H)$, then $f=f^{**}$. Moreover, we have $y \in\partial f(x) \Leftrightarrow x\in \partial f^*(y) \Leftrightarrow f(x) + f^*(y) = \langle y,x \rangle$. 
If $f\in \Gamma_0(\H)$, the proximity operator ${\rm prox}_{cf}$ is defined as ${\rm prox}_{cf}\, x = {\arg}\min_{u\in\H}\left( f(u) + \frac{1}{2c}\|x-u\|^2\right)$ and is related to the resolvent of $\partial f$ by ${\rm prox}_{cf}\,x = J_{c\partial f}x$ \cite[Proposition 16.44]{bau:17}. If $\iota_{K}$ is the indicator function on a closed convex subset $K$ of $\H$, then ${\rm prox}_{\iota_{K}}\,x =  P_K x$.

An undirected graph is denoted by $G=(\mathcal{V},\mathcal{E})$, where $\mathcal{V}$ is the set of vertices representing the  nodes in the network and $\mathcal{E}=\{\{i,j\}\,|\, i, j\in \mathcal{V}\}$ is the set of undirected edges (unordered pairs) in the graph representing the communication links in the network. We use $\bar{\E} = \{(i,j)\,|\, i, j\in \mathcal{V}\}$ to denote the set of all directed edges (ordered pairs). Therefore, $|\bar{\E}|=2|\E|$. 
We use $\mathcal{N}_i$ to denote the set of all neighbouring nodes of node $i$, i.e., $\mathcal{N}_i=\{j\in\V \,|\,\{i,j\}\in \mathcal{E}\}$, and ${\rm deg}\,i = |{\cal N}_i|$ to denote the degree of vertex $i\in\V$.

\subsection{Monotone operators and operator splitting}

The theory of monotone set-valued operators plays a central role in deriving iterative convex optimisation algorithms. Global minimisers of proper functions can be characterised by the principle $\arg\!\min f \Leftrightarrow \{x\in\H \,|\, 0 \in \partial f(x)\}$. The subdifferential of a convex function
is a (maximally) monotone operator, and the problem at hand can thus be expressed as finding a zero of a monotone operator (monotone inclusion problem) which, in turn, is transformed into finding a fixed point of its associated resolvent.
The fixed point is then found by the fixed point (Banach-Picard) iteration, yielding an algorithm for the original problem. 
In this section we give background information about monotone operators and operator splitting to support the remainder of the manuscript.
\begin{definition}[Monotone operator] 
Let $T:\H\to 2^{\H}$. Then $T$ is monotone if 
\[
(\forall(x,u)\in{\rm gra}\,T)(\forall(y,v)\in{\rm gra}\,T) \quad \langle  v-u, y-x \rangle \geq 0.
\] 
The operator is said to be strictly monotone if strict inequality holds.
The operator is said to be uniformly monotone with modulus $\phi : \R_+\to [0,+\infty)$ if $\phi$ is increasing, vanishes only at 0, and
\[
(\forall(x,u)\in{\rm gra}\,T)(\forall(y,v)\in{\rm gra}\,T) \quad \langle  v-u, y-x \rangle \geq \phi( \|y-x\|).
\]
The operator is said to be strongly monotone with constant $m>0$, or $m$-strongly monotone, if $T-mI$ is monotone, i.e.,
\[
(\forall(x,u)\in{\rm gra}\,T)(\forall(y,v)\in{\rm gra}\,T) \quad \langle v-u, y-x \rangle \geq m \|y-x\|^2.
\]
The operator is said to be maximally monotone if for every $(x,u)\in \H\times \H$,
\[
(x,u)\in{\rm gra}\,T \,\, \Leftrightarrow \,\, \big(\forall (y,v)\in{\rm gra}\,T\big) \quad \langle  v-u, y-x \rangle \geq 0.
\]
In other words, there exists no monotone operator $S:\H\to 2^{\H}$ such that ${\rm gra}\,S$ properly contains ${\rm gra}\,T$.
\end{definition}
It is clear that strong monotonicity implies uniform monotonicity, which itself implies strict monotonicity.
\begin{definition}[Nonexpansiveness]
Let $D$ be a nonempty subset of $\H$ and let $T:D\to\H$. Then $T$ is nonexpansive if 
\[
(\forall x\in D)(\forall y\in D) \quad \|Ty-Tx\| \leq \|y-x\|.
\]
$T$ is called strictly nonexpansive, or contractive, if strict inequality holds. The operator is firmly nonexpansive if
\[
(\forall x\in D)(\forall y\in D) \quad \|Ty- Tx\|^2 \leq \langle Ty-Tx, y-x\rangle.
\]
\end{definition}
\begin{definition}[Averaged nonexpansive operator]
Let $D$ be a nonempty subset of $\H$ and let $T:D\to\H$ be nonexpansive, and let $\alpha \in (0,1)$. Then $T$ is averaged with constant $\alpha$, or $\alpha$-averaged, if there exists a nonexpansive operator $S : D \to \H$ such that $T = (1-\alpha)I + \alpha S$. 
\end{definition}
It can be shown that if $T$ is maximally monotone, then the resolvent $J_{cT}$ is firmly nonexpansive \cite[Proposition 23.8]{bau:17} and the reflected resolvent $R_{cT} = 2J_{cT}-I$ is nonexpansive \cite[Corollary 23.11\,(ii)]{bau:17}. We have
\[
0\in Tx \Leftrightarrow x \in (I+cT)x \Leftrightarrow (I+cT)^{-1}x \ni x \Leftrightarrow x=J_{cT}x,
\] 
where the last relation holds since $J_{cT}$ is single valued. Therefore, we conclude that a monotone inclusion problem is equivalent to finding a fixed point of its associated resolvent. Moreover, since $J_{cT} = \frac{1}{2}(R_{cT}+I)$ is $1/2$-averaged, we have, by the Krasnosel'skii-Mann algorithm, that the sequence generated by the Banach-Picard iteration $x^{(k+1)} = J_{cT}x^{(k)}$ is Fej\'{e}r monotone \cite[Definition 5.1]{bau:17} and converges weakly\footnote{For finite-dimensional Hilbert spaces weak convergence implies strong convergence.} to a fixed point $x^*$  of $J_{cT}$ for any $x^{(0)}\in {\rm dom}\,J_{cT}$ \cite[Theorem 5.15]{bau:17}, and thus to a zero of $T$.
In that case the Banach-Picard iteration $x^{(k+1)} = J_{c\partial f}x^{(k)}$ results in the well known proximal point method \cite[Theorem 23.41]{bau:17}.

For many maximal monotone operators $T$, the inversion operation needed to evaluate the resolvent may be prohibitively difficult. A more widely applicable alternative is to devise an operator splitting algorithm in which $T$ is decomposed as $T=T_1 + T_2$, and the (maximal monotone) operators $T_1$ and $T_2$ are employed in separate steps. Examples of popular splitting algorithms are the forward-backward method, Tseng's method,  the Peaceman-Rachford splitting algorithm and the Douglas-Rachford splitting algorithm, where the first two methods require $T_1$ (or $T_2$) to be single valued (for example the gradient of a differentiable convex function). The Peaceman-Rachford splitting algorithm is given by the iterates  \cite[Proposition 26.13]{bau:17}
\begin{align}
x^{(k)} &= J_{cT_1}z^{(k)}, \nonumber \\
v^{(k)} &= J_{cT_2}(2x^{(k)} - z^{(k)}), \label{eq:prs} \\
z^{(k+1)} &= z^{(k)} + 2(v^{(k)} - x^{(k)}). \nonumber
\end{align}
When $T_1$ is uniformly monotone, $x^{(k)}$ converges strongly to $x^*$ (notation $x^{(k)} \to x^*$) for any $z^{(0)}\in\H$, where $x^*$ is the solution to the monotonic inclusion problem $0 \in T_1x + T_2x$.
The iterates \eqref{eq:prs} can be compactly expressed using reflected resolvents as
\begin{align*}
x^{(k)} &= J_{cT_1}z^{(k)},\\
z^{(k+1)} &= (R_{cT_2}\circ R_{cT_1}) z^{(k)}.
\end{align*}
If either $R_{cT_1}$ or $R_{cT_2}$ 
is contractive, then $R_{cT_2}\circ R_{cT_1}$ is contractive and the Peacman-Rachford iterates converge geometrically. 
Note that since $R_{cT_2}\circ R_{cT_1}$ is nonexpansive, without the additional requirement of $T_1$ being uniformly monotone, there is no guarantee that the iterates will converge. In order to ensure convergence without imposing conditions like uniform monotonicity, we can average the nonexpansive operator. In the case of $1/2$-averaging, the $z$-update is given by
\[
z^{(k+1)}  = \frac{1}{2} \left(I + R_{cT_2}\circ R_{cT_1}\right) z^{(k)},
\]
which is called the Douglas-Rachford splitting algorithm.   
This method was first presented in \cite{dou:56, lio:79} and converges under more or less the most general possible conditions.
A well known instance of the Douglas-Rachford splitting algorithm is the alternating direction method of multipliers (ADMM) 
 \cite{gab:76,eck:93,boy:11} 
 or the split Bregman method \cite{bre:67}.

\section{Problem Setting}
\label{sec:probl}

Let $G = (\V,\E)$ be an undirected graph. We will consider the minimisation of a separable function over the graph $G$ subject to cone constraints. Let $N=|\V|$, $(n_1,\ldots,n_N)\in\N^N_+$ be strictly positive integers, and set $(\forall  i\in{\cal V}) \,\,{\cal L}_i = \{1,\ldots,n_i\}$. We consider the problem
\begin{equation}
\begin{array}{lll} \text{minimise} & {\displaystyle \sum_{i\in {\cal V}} f_i(x_i)} \\\rule[4mm]{0mm}{0mm}
\text{subject to} & (\forall  \{i,j\}\in {\cal E})  &A_{ij}x_i + A_{ji}x_j  - b_{ij} \in K_{ij}, \\
&(\forall  i\in{\cal V})(\forall \ell_i \in{\cal L}_i)  &h_{\ell_i} (x_i) \leq 0, \rule[4mm]{0mm}{0mm}
\end{array}
\label{eq:primalij}
\end{equation}
where $f_i \in \Gamma_0(\H_i)$, $A_{ij} \in\C(\H_i,\G_{ij})$, $b_{ij}\in\G_{ij}$, $K_{ij}\subseteq \G_{ij}$ is a closed convex cone, and $h_{\ell_i}\in \Gamma_0(\H_i)$. 
Note that \eqref{eq:primalij} also includes equality constraints of the form $A_{ij}x_i + A_{ji}x_j  = b_{ij}$ by setting $K_{ij} = \{0\}$, the trivial cone. The constraints $h_i(x_i)\leq 0$ can be used, for example, to set constraints on the individual entries of $x_i$ in the case $\H_i = \R^{m\times n}$, the Hilbert space of $m\times n$ real matrices. See \ref{sec:mimo} for a prototypical example.  Hence, problem \eqref{eq:primalij} describes the optimisation of a nonlinear convex function subject to cone constraints. We will refer to solving such programs as nonlinear conic programming (NLCP). If the objective function is linear and $K_{ij} = S^{n_{ij}}_+$, where $n_{ij} = {\rm dim}\, \G_{ij}$, \eqref{eq:primalij} reduces to semidefinite programming (SDP). In fact, \eqref{eq:primalij} has linear programming (LP), (convex) quadratic programming (QP), (convex) quadratically constrained quadratic programming (QCQP) and second-order cone programming (SOCP) as special cases.

In order to compactly express \eqref{eq:primalij}, we introduce
$x = (x_i)_{i\in\V}\in\H$, where $\H = \bigoplus_{i\in\V}\H_i$. Similarly, we define $\G = \bigoplus_{(i.j)\in\E}\G_{ij}$. Let $A : \H\to\G : (x_i)_{i\in\V} \mapsto (A_{ij}x_i+A_{ji}x_j)_{\{i,j\}\in\E}$, and $b = (b_{ij})_{\{i,j\}\in\E}$. With this, \eqref{eq:primalij} can be expressed as
\begin{equation}
\begin{array}{ll} \text{minimise} & f(x) \\\rule[4mm]{0mm}{0mm}
\text{subject to} & Ax-b  \in K, 
\end{array}
\label{eq:primal}
\end{equation} 
where  $f(x) = \sum_{i\in {\cal V}} \left( f_i(x_i) + \iota_{N_i}(x_i)\right)$, and $\iota_{N_i}$ is the indicator function on the set $N_i = \{x_i\in\H_i \,|\, (\forall {\ell_i} \in {\cal L}_i) \,\, h_{\ell_i} (x_i)\leq 0\}$, and $K = \bigtimes_{\{i,j\}\in{\cal E}}K_{ij}$. 
Hence, the inequality constraints $h_{\ell_i}(x_i)\leq 0$ are incorporated in the objective function using indicator functions for reasons that will become clear later in Section~\ref{sec:ineq}. The Lagrange dual function $g:\G\to \R\cup\{+\infty\}$ is defined as
\begin{align*}
g(\lambda) &= \inf_{x\in\H}\big( f(x) + \langle\lambda, Ax-b\rangle \big) \\
    &= -f^*(-A^*\lambda) - \langle\lambda,b\rangle, \rule[3mm]{0mm}{0mm}
\end{align*} 
with $f^*$ the Fenchel conjugate of $f$ and $A^*\in\C(\G,\H)$ is the adjoint of $A$.
With this, the dual problem is given by 
\cite[Proposition 27.17]{bau:17}
\begin{equation}
\begin{array}{ll}  \text{minimise} & f^*(-A^*\lambda) + \langle\lambda,b\rangle \\\rule[4mm]{0mm}{0mm}
\text{subject to} & \lambda \in K^\circ,
\end{array}
\label{eq:dual}
\end{equation}
where $K^\circ$ denotes the polar cone of $K$.  
We have $\lambda = (\lambda_{ij})_{\{i,j\}\in\E}$, where $\lambda_{ij}\in\G_{ij}$ denotes the Lagrange multipliers associated with the constraints on edge $\{i,j\}\in \cal E$. At this point we would like to highlight that in the case we have only equality constraints in \eqref{eq:primal}, that is, constraints of the form $Ax=b$, we have $K=\{0\}$ and $K^\circ=K^*=\H$, and the dual problem is simply an unconstrained optimisation problem. That is, problem \eqref{eq:dual} reduces to
\begin{equation}
\begin{array}{ll}  \text{minimise} & f^*(-A^*\lambda) + \langle\lambda,b\rangle.
\end{array}
\label{eq:dualeq}
\end{equation}

\section{Operater splitting of the lifted dual function}
\label{sec:operator}

Let $\tilde{f}_i(x_i) = f_i(x_i)+\iota_{N_i}(x_i)$.
Since $f$ is separable, we have
\[
f(x) = \sum_{i\in {\cal V}} \tilde{f}_i(x_i)  \quad\Leftrightarrow\quad f^*(y) = \sum_{i\in {\cal V}} \tilde{f}_i^*(y_i),
\]
that is, the conjugate function of a separable CCP function is itself separable and CCP. Moreover,
the adjoint operator $A^*$ satisfies
\begin{align*}
    \langle A^*\lambda,x\rangle &= \langle \lambda,Ax\rangle \\
    &= \sum_{\{i,j\}\in\E} \langle \lambda_{ij}, A_{ij}x_i + A_{ji}x_j \rangle \\
    &= \sum_{i\in\V} \sum_{j\in{\cal N}_i} \langle\lambda_{ij}, A_{ij}x_i \rangle \\
    &= \sum_{i\in\V} \langle\sum_{j\in{\cal N}_i}  A_{ij}^*\lambda_{ij}, x_i\rangle,
\end{align*}
from which we conclude that
\[
A^* : \G\to\H : (\lambda_{ij})_{\{i,j\}\in\E} \mapsto \left(\sum_{j\in{\cal N}_i} A_{ij}^* \lambda_{ij} \right)_{\!\!i\in\V}\!\!\!,
\]
and thus
\begin{equation}
f^*(-A^*\lambda) =  \sum_{i\in {\cal V}} \tilde{f}^*_i\bigg(\!-\!\sum_{j\in {\cal N}_i}A_{ij}^*\lambda_{ij}\bigg).
\label{eq:optconstrij}
\end{equation}
By inspection of \eqref{eq:optconstrij} we conclude that every $\lambda_{ij}$,
associated with edge $\{i,j\}$, is used by two conjugate  functions: $\tilde{f}_i^*$ and $\tilde{f}_j^*$. As a consequence, all conjugate functions depend on each other. 
We therefore introduce auxiliary variables to decouple the node dependencies. That is,  
we introduce for each edge $\{i,j\}\in \cal E$ {two} auxiliary {node} variables $\lambda_{i|j}$ and $\lambda_{j|i}$, one for each node $i$ and $j$, respectively, and require that  $\lambda_{i|j} = \lambda_{j|i}$. 
That is, let $\bar{\lambda} = (\lambda_{i|j})_{(i,j)\in\bar{\E}}\in\bar{\G}$ where $\bar{\G} = \bigoplus_{(i,j)\in\bar{\E}}\G_{ij}$, and define
\[
C : \H\to\bar{\G} : (x_i)_{i\in\V} \mapsto (A_{ij}x_i)_{(i,j)\in\bar{\E}},
\]
a permutation operator $P : \bar{\G}\to \bar{\G} : (\lambda_{i|j})_{(i,j)\in\bar{\E}} \mapsto (\lambda_{j|i})_{(i,j)\in\bar{\E}}$, and $d = (\frac{1}{2}b_{ij})_{(i,j)\in\bar{\E}}$. With this, we can reformulate the dual problem \eqref{eq:dual} as 
\begin{equation}
\begin{array}{ll} \text{minimise} \; \;&f^*(-C^*\bar{\lambda}) + \langle \bar{\lambda}, d\rangle  \\
\rule[4mm]{0mm}{0mm}
\text{subject to} & \bar{\lambda}\in \bar{K}^\circ, \\
& \rule[4mm]{0mm}{0mm}\bar{\lambda} = P\bar{\lambda},
\end{array}
\label{eq:exdual}
\end{equation}
where $C^*$ is the adjoint of $C$ and $\bar{K}^\circ = \bigtimes_{(i,j)\in{\bar{\E}}}K_{ij}^\circ$.  We will refer to \eqref{eq:exdual} as the lifted dual problem of the primal problem \eqref{eq:primal}. Note that  $(I+P)C : \H\to\bar{\G} : (x_i)_{i\in\V} \mapsto (A_{ij}x_i + A_{ji}x_j)_{(i,j)\in\bar{\E}}$ so that 
\begin{equation}
(I+P)Cx -d  \in \bar{K} \; \Leftrightarrow \; Ax- b \in K.
\label{eq:CxAx}
\end{equation}
Moreover, if $y\in \bar{\G}$ such that $y\in \bar{K}$, then $Py \in \bar{K}$ by construction.
Let $M = \{\bar{\lambda}\in\bar{\G} \,|\, \bar{\lambda}\in \bar{K}^\circ, \bar{\lambda}=P\bar{\lambda}\}$. With this, problem \eqref{eq:exdual} can be equivalently expressed as 
\begin{equation}
 \text{minimise} \; \; f^*(-C^*\bar{\lambda}) + \langle \bar{\lambda}, d\rangle  +  \iota_{M}(\bar{\lambda}). 
\label{eq:newprob}
\end{equation}
Again, by comparing general cone vs.\ equality constraint optimisation, the difference is in the definition of the set $M$; for equality constraint optimisation the set $M$ reduces to $M = \{\bar{\lambda}\in\bar{\G} \,|\, \bar{\lambda}=P\bar{\lambda} \}$.
The optimality condition for problem \eqref{eq:newprob} is given by the inclusion problem
\begin{equation}
0\in -C \partial f^*(-C^*\bar{\lambda}) + d + \partial \iota_{M}(\bar{\lambda}).
\label{eq:optconstr}
\end{equation}

In order to apply Peaceman-Rachford splitting to \eqref{eq:optconstr}, we  define  the operators $T_1$ and $T_2$ as $T_1 = -C \partial f^*(-C^*(\cdot)) + d$ and $T_2 = \partial \iota_{M}$. Since $T_1$ is the subdifferential of $f^*(-C^*\bar{\lambda}) + \langle \bar{\lambda}, d\rangle$, which is convex, both $T_1$ and $T_2$ are  monotone.
Maximality follows directly from the maximality of the subdifferential \cite[Theorem 20.25]{bau:17}.
As a consequence, Peaceman-Rachford splitting to \eqref{eq:optconstr} yields the iterates 
\begin{subequations}
\begin{align}
\bar{\lambda}^{(k)} &= J_{cT_1}z^{(k)},\\
z^{(k+1)} &= (R_{cT_2}\circ R_{cT_1})z^{(k)}.
\label{eq:iterates_b}
\end{align}
\label{eq:iterates}
\end{subequations}
\vspace{-\baselineskip}\\
We will first focus on the reflected resolvent $R_{cT_2}$ in (\ref{eq:iterates_b}), which carries the inequality constraints encapsulated by $M$. To do so, we introduce an auxiliary vector $y^{(k)}$, such that 
\begin{align*}
y^{(k)} &= R_{cT_1} z^{(k)}, \\
z^{(k+1)} &= R_{cT_2}y^{(k)}.
\end{align*}
Since $M$ is the intersection of an affine subspace $S = \{\bar{\lambda}\in\bar{\G} \,|\,\bar{\lambda}=P\bar{\lambda} \}$ and a closed convex cone $\bar{K}^\circ$, $M$ is closed and convex, and we have $J_{cT_2}y = {\rm prox}_{c\iota_{M}}(y) =  P_My$, 
the projection of $y$ onto $M$. As a consequence, $R_{cT_2}$ is given by $R_{cT_2} =  2 P_M-I$,  the reflection with respect to $M$, which we will denote by $R_M$. We can explicitly compute $ P_My$, and thus $R_My$.
\begin{lemma} Let $y\in\bar{\G}$. Then
\[
 P_M y = \frac{1}{2}  P_{\bar{K}^\circ} (I+P)y.
\]
\label{lemma:proj}
\end{lemma}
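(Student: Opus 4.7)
My plan is to exploit the product structure of $\bar{K}^\circ$ and the fact that $P$ is an orthogonal involution so that projection onto $M$ factors as projection onto the subspace $S$ followed by projection onto the cone $\bar{K}^\circ$.

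First I will observe that since $P$ merely swaps the coordinates $\lambda_{i|j}$ and $\lambda_{j|i}$, it is an orthogonal linear operator with $P^2=I$. Its fixed subspace is $S=\{\bar{\lambda} \,|\, \bar{\lambda}=P\bar{\lambda}\}$, and the standard orthogonal projection onto the fixed subspace of an involution is $P_S=\tfrac{1}{2}(I+P)$. Because $\bar{K}^\circ=\bigtimes_{(i,j)\in\bar{\E}}K_{ij}^\circ$ and $P$ only permutes the pair of blocks indexed by $(i,j)$ and $(j,i)$ (which carry the same cone $K_{ij}^\circ=K_{ji}^\circ$), the cone $\bar{K}^\circ$ is $P$-invariant: $P(\bar{K}^\circ)=\bar{K}^\circ$.

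Next I will show that $P$ and $P_{\bar{K}^\circ}$ commute. Fix $u\in\bar{\G}$ and let $w=P_{\bar{K}^\circ}u$. Because $P$ is an isometry with $P^{-1}=P$ that maps $\bar{K}^\circ$ onto itself, for every $v\in\bar{K}^\circ$ we have $Pv\in\bar{K}^\circ$ and
\[
\|Pu-Pw\|=\|u-w\|\le \|u-Pv\|=\|Pu-P(Pv)\|=\|Pu-v\|,
\]
so $Pw$ realises the minimum distance from $Pu$ to $\bar{K}^\circ$, i.e.\ $P(P_{\bar{K}^\circ}u)=P_{\bar{K}^\circ}(Pu)$.

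Now define $w=P_{\bar{K}^\circ}P_S y=\tfrac{1}{2}P_{\bar{K}^\circ}(I+P)y$, using positive homogeneity of the projection onto the cone $\bar{K}^\circ$. I will first verify $w\in M$. Setting $u=P_Sy$, we have $Pu=u$, so by commutation $Pw=P(P_{\bar{K}^\circ}u)=P_{\bar{K}^\circ}(Pu)=P_{\bar{K}^\circ}u=w$; hence $w\in S$, and $w\in\bar{K}^\circ$ by construction, giving $w\in M$.

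Finally I will verify the variational characterisation $\langle v-w,y-w\rangle\le 0$ for every $v\in M$, which uniquely identifies $w=P_My$. Decompose $y-w=(y-u)+(u-w)$. Since $y-u=\tfrac{1}{2}(I-P)y\in S^{\perp}$ and $v-w\in S$ (as both lie in $M\subseteq S$), the first inner product vanishes:
\[
\langle v-w,\,y-u\rangle=0.
\]
For the second term, $v\in\bar{K}^\circ$ and $w=P_{\bar{K}^\circ}u$, so by the projection inequality on $\bar{K}^\circ$,
\[
\langle v-w,\,u-w\rangle\le 0.
\]
Adding gives $\langle v-w,y-w\rangle\le 0$, and by uniqueness of projection onto the nonempty closed convex set $M$, I conclude $P_My=w=\tfrac{1}{2}P_{\bar{K}^\circ}(I+P)y$.

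The only subtle step is the commutation $P\circ P_{\bar{K}^\circ}=P_{\bar{K}^\circ}\circ P$; everything else is a routine orthogonal decomposition argument. That commutation, however, rests squarely on the structural coincidence that the pair of cones assigned to the directed edges $(i,j)$ and $(j,i)$ are identical, which is precisely what makes the lifting \eqref{eq:exdual} useful.
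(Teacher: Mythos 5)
Your proof is correct, but it takes a genuinely different route from the paper's. The paper characterises $P_My$ through the KKT conditions of the constrained problem $\min_{u\in M}\|u-y\|^2$: it introduces a multiplier $\tilde{\xi}\in\bar{K}$ for the conic constraint and $\tilde{\eta}$ for the consensus constraint, applies $\tfrac{1}{2}(I+P)$ to the stationarity equation to kill the $(I-P)^*\tilde{\eta}$ term, and then invokes Moreau's conic decomposition $v=P_{\bar{K}}v+P_{\bar{K}^\circ}v$ to identify $\tilde{u}=P_{\bar{K}^\circ}v$. You instead guess the candidate $w=P_{\bar{K}^\circ}P_Sy$ and verify directly that it satisfies the variational inequality $\langle v-w,y-w\rangle\le 0$ for all $v\in M$, splitting $y-w$ into a piece in $S^\perp$ (orthogonal to $v-w\in S$) and the residual $u-w$ handled by the projection inequality on $\bar{K}^\circ$. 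The one genuinely new ingredient you need is the commutation $P\circ P_{\bar{K}^\circ}=P_{\bar{K}^\circ}\circ P$, which you establish correctly from $P$ being an isometric involution that maps $\bar{K}^\circ$ onto itself; this is the same structural fact ($P$-invariance of the product cone) that the paper uses when it asserts $P\tilde{\xi}\in\bar{K}$. What your approach buys is self-containedness: it avoids any appeal to the existence of Lagrange multipliers (a constraint qualification the paper leaves implicit) and avoids the somewhat delicate final step of the paper's argument, where $P_{\bar{K}}v-P_{\bar{K}}w$ is implicitly treated as $P_{\bar{K}}(v-w)$ even though cone projections are not linear. The price is that you must separately prove the commutation lemma and verify membership $w\in M$, steps the paper gets essentially for free from its multiplier formulation.
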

\begin{proof}
We have
\begin{equation}
     P_M y = \arg\min_{u\in M} \| u -y\|^2.
    \label{eq:minu}
\end{equation}
Then $\tilde{u}$ is a solution to \eqref{eq:minu} if and only if
\begin{subequations}
\begin{align}
1. \,\,\,&  \tilde{u} \in \bar{K}^\circ, \, \tilde{u} = P\tilde{u} \label{Pu}\\
2.\,\,\,& (\exists \,\tilde{\xi} \in \bar{K})(\exists\, \tilde{\eta}\in\bar{\G})  \;  \left\{ \begin{array}{l} 0 = 2(\tilde{u}-y) + \tilde{\xi} + (I-P)^*\tilde{\eta}, \\ \langle \tilde{\xi}, \tilde{u}\rangle = 0. \end{array} \right. \label{minL}
\end{align}
\end{subequations}
Combining  \eqref{Pu} and \eqref{minL}, and using the fact that $P^2=I$, we obtain $\tilde{u} = \frac{1}{2}(I+P)y - \frac{1}{4}(I+P) \tilde{\xi}$. Let $v=\frac{1}{2}(I+P)y$ and $w = \frac{1}{4}(I+P) \tilde{\xi}$.
Since $\tilde{\xi}\in \bar{K}$, we have $P\tilde{\xi}\in \bar{K}$ (by construction of $\bar{K}$) and thus $w\in \bar{K}$ since $\bar{K}$ is closed and convex.  Moreover, we have $v =  P_{\bar{K}}v +  P_{\bar{K}^{\circ}}v$ and $w =  P_{\bar{K}}w$ so that $\tilde{u} =  P_{\bar{K}}(v -w) +  P_{\bar{K}^{\circ}}v$. However, since $\tilde{u}\in \bar{K}^\circ$, we conclude that $ P_{\bar{K}}(v -w) = 0$ and thus $\tilde{u}=  P_{\bar{K}^\circ}v$ which completes the proof.
\end{proof}
In order to find a dual expression for $J_{cT_1}z^{(k)}$, 
we note that
\[
{\bar{\lambda}}= J_{cT_1}z \quad \Leftrightarrow \quad z - {\bar{\lambda}} \in cT_1{\bar{\lambda}}.
\]
Hence, ${\bar{\lambda}} = z +c( C{u} - d)$ where ${u} \in \partial f^*(-C^*{\bar{\lambda}})$ {and thus $-C^*{\bar{\lambda}} \in \partial f({u})$. Hence, 
$0 \in \partial f({u}) + C^*{\bar{\lambda}} =  \partial f({u}) + C^*z + cC^*(C{u} - d)$ so that
\[
{u} = \arg\min_x\left(f(x) + \langle z,Cx\rangle + \frac{c}{2}\|Cx-d\|^2\right).
\]
With this, the iterates \eqref{eq:iterates} can  be expressed as
\begin{align}
x^{(k)} &=\displaystyle \arg\min_x\left(f(x) + \langle z^{(k)},Cx\rangle + \frac{c}{2}\|Cx-d\|^2\right), \nonumber \\
\bar{\lambda}^{(k)} &= z^{(k)} + c(Cx^{(k)}-d), \label{eq:pdmmmu} \\
y^{(k)} &= 2\bar{\lambda}^{(k)}  - z^{(k)}, \rule[4mm]{0mm}{0mm} \nonumber \\
z^{(k+1)} &= R_My^{(k)},\rule[4mm]{0mm}{0mm} \nonumber
\end{align}
which can be simplified to
\begin{subequations}
\label{eq:pdmm}
\begin{align}
x^{(k)} &=\displaystyle \arg\min_x\left(f(x) + \langle z^{(k)},Cx\rangle + \frac{c}{2}\|Cx-d\|^2\right),  \label{eq:pdmma}\\
y^{(k)} &= z^{(k)} + 2c(Cx^{(k)}-d),  \label{eq:pdmmb} \\
z^{(k+1)} &= R_My^{(k)}.\rule[4mm]{0mm}{0mm} \label{eq:pdmmc}
\end{align}
\end{subequations}
The iterates \eqref{eq:pdmm} are collectively referred to as the {\em generalised primal-dual method of multipliers} (GPDMM).

Recall that $R_{cT_2} =  2 P_M-I = R_M$.
To get some insight in how to implement $R_M$, note that $R_My =  P_{\bar{K}^\circ} (I+P)y - y$. In the case of equality constraints we have $\bar{K}=\{0\}$ and thus $\bar{K}^\circ = \bar{G}$ so that $R_My = Py$, which is simply a permutation operator.  This permutation operator represents the actual data exchange in the network. That is, we have for all $\{i,j\}\in{\cal E} : z_{i|j} \leftarrow y_{j|i}, \, z_{j|i} \leftarrow y_{i|j}$. 
In the case of general cone constraints, we have $z_{i|j} \leftarrow   P_{{K}^\circ_{ij}}(y_{i|j}+y_{j|i}) - y_{i|j}$ and $z_{j|i} \leftarrow  P_{{K}^\circ_{ij}}(y_{i|j}+y_{j|i}) - y_{j|i}$. 

The distributed nature of PDMM can be made explicit by exploiting the structure of $C$ and $d$
and writing out the update equations \eqref{eq:pdmm}. Recall that $f(x) = \sum_{i\in {\cal V}} \left( f_i(x_i) + \iota_{N_i}(x_i)\right)$. Let
\[
(\forall i\in\V) \;\;\;
C_i:\H_i\to \bar{\G} :  x_i \mapsto \left\{ \begin{array}{ll} A_{ij}x_i, & \text{if } (i,j)\in\bar{\E}, \\ 0, & \text{if } (i,j)\not\in\bar{\E}, \rule[4mm]{0mm}{0mm}\end{array} \right. 
\]
so that $Cx =\sum_{i\in\V} C_ix_i$. In addition,  since $\langle C_j x_j, C_i x_i\rangle = 0$ for $j\neq i$, we have  $\|Cx-d\|^2 = \sum_{i\in\V}\|C_ix_i-d\|^2$.
With this, \eqref{eq:pdmma} for all $i \in \mathcal{V}$ can be expressed as
\begin{align}
 x_i^{(k)} &=\arg\min_{x_i} \left( f_i(x_i)+ \iota_{N_i}(x_i) + \langle z^{(k)}, C_i x_i \rangle +  \frac{c}{2}\|C_i x_i - d\|^2 \right) \nonumber \\
&=\arg\min_{x_i\in N_i}\left( f_i(x_i) + \langle z^{(k)}, C_i x_i \rangle +  \frac{c}{2}\|C_i x_i - d\|^2 \right) \label{eq:xic}  \\
&=\arg\min_{x_i\in N_i} \bigg(f_i(x_i) \hspace{-0.7mm}+\hspace{-0.7mm}  \sum_{j\in\mathcal{N}_i}\!\bigg(\langle z_{i|j}^{(k)},A_{ij}x_i\rangle \hspace{-0.7mm}+\hspace{-0.7mm}  \frac{c}{2}\|A_{ij}x_i \hspace{-0.7mm} - \hspace{-0.7mm} \frac{1}{2}b_{ij}\|^2 \bigg) \!\! \bigg). \nonumber
\end{align}
In addition, we can express \eqref{eq:pdmmb} as
\[
(\forall i\in\V) (\forall j \in {\cal N}_i) \,\,\, y_{i|j}^{(k)}  = z_{i|j}^{(k)} + 2c\left(A_{ij}x_i^{(k)}  - \frac{1}{2}b_{ij} \right),
\]
after which data is exchanged amongst neighbouring nodes, and the auxiliary variables are updated:
\begin{equation}
    (\forall i\in\V) (\forall j \in {\cal N}_i) \,\,\, z_{i|j}^{(k+1)} =    P_{{K}^\circ_{ij}}\left(y_{i|j}^{(k)}+y_{j|i}^{(k)}\right) - y_{i|j}^{(k)}.
\label{eq:zupdate}
\end{equation}
The resulting algorithm is visualised in the pseudo-code of Algorithm~\ref{alg:pdmm}. 
It can be seen that no direct collaboration is required between nodes during the computation of these updates, leading to an attractive (parallel) algorithm for optimisation in practical networks. The update \eqref{eq:pdmmc} can be interpreted as one-way transmissions of the auxiliary $y$ variables to neighbouring nodes where the actual update of the $z$ variables is done.
\begin{algorithm}[t]
\caption{Synchronous GPDMM.}\label{alg:PDMMuni}
\begin{algorithmic}[1]
\State\textbf{Initialise:}$\quad z^{(0)}\in\bar{\G}, c \in \R_{++}$ \Comment{Initialisation}
\For{$k=0,...,$}
    \For{$i\in \cal V$}\Comment{Node updates}
        \State $\displaystyle x_i^{(k)} =\arg\min_{x_i\in N_i}\bigg( f_i(x_i)+ \sum_{j\in\mathcal{N}_i}\!\left(\langle z_{i|j}^{(k)},A_{ij}x_i\rangle +  \frac{c}{2}\|A_{ij}x_i - \frac{1}{2}b_{ij}\|^2  \right)\!\!\bigg)$
        \ForAll{$j\in\mathcal{N}_i$}
            \State $y_{i|j}^{(k)} = z_{i|j}^{(k)} + 2c \left(A_{ij}x_i^{(k)} - \frac{1}{2}b_{ij}\right)$
        \EndFor
    \EndFor
    \item[]
    \ForAll{$i\in{\cal V}, j\in{\cal N}_i$}\Comment{Transmit variables}
        \State $\textbf{node}_j\leftarrow\textbf{node}_i\left(y_{i|j}^{(k)}\right)$
    \EndFor
    \item[]
    \ForAll{$i\in{\cal V}, j\in{\cal N}_i$}\Comment{Auxiliary updates}
        \State $z_{i|j}^{(k+1)} =  P_{{K}^\circ_{ij}}\left(y_{i|j}^{(k)}+y_{j|i}^{(k)}\right) - y_{i|j}^{(k)}$
    \EndFor
\EndFor
\end{algorithmic}
\label{alg:pdmm}
\end{algorithm}

\subsection{Node constraints}
\label{sec:node}

So far we considered constraints of the form $A_{ij}x_i + A_{ji}x_j  - b_{ij}\in K_{ij}$. If we set $A_{ji}$ to be the zero operator, we have constraints of the form $A_{ij}x_i - b_{ij}\in K_{ij}$, which are node constraints; it sets constraints on the values  $x_i$ can take on. 
Even though $x_j$ is not involved in the constraint anymore, there is still communication needed between node $i$ and node $j$ since at the formulation of the lifted dual problem \eqref{eq:exdual} we have introduced two auxiliary variables, $\lambda_{i|j}$ and $\lambda_{j|i}$, one at each node, to control the constraints between node $i$ and $j$. This was done independent of the actual value of $A_{ij}$ and $A_{ji}$. In order to guarantee convergence of the algorithm, these variables need to be updated and exchanged during the iterations. 
To avoid such communication between nodes, we can introduce dummy nodes, one for every node that has a node constraint. Let $i'$ denote the dummy node introduced to define the node constraint on node $i$. That is, we have $A_{ii'}x_i - b_{ii'}\in K_{ii'}$. Since dummy node $i'$ is only used to communicate with node $i$, it is a fictive node and can be incorporated in node $i$, thereby avoiding any network communication for node constraints. In such cases, we will simply write  $A_{i}x_i - b_{i}\in K_{i}$.

\begin{example}
Let $(\forall i\in\V) \,\, \H_i= S^n$, and consider the node constraints $(\forall i\in\V) \, X_i \in S^n_+$, the cone of positive semidefinite matrices in $\R^{n\times n}$. Hence $(\forall i\in\V) \, K_i = S^n_+$. Since $K_i^\circ = - K_i^* = -S^n_+=S^n_-$, where $S^n_-$ is the cone of negative semidefinite matrices, so that \eqref{eq:zupdate} becomes $(\forall i\in\V)(\forall j\in{\cal N}_i) \, Z_{i|j} =  P_{S^n_-}\left(Y_{i|j}+Y_{j|i}\right) - Y_{i|j}$, where
$ P_{S^n_-}Y = Q\Lambda_- Q^T$, where $Q\Lambda Q^T$ is the eigenvalue decomposition of $Y$ and  $\Lambda_-$ is the matrix obtained from $\Lambda$ by setting the positive entries to $0$.  
\label{ex:posdef}
\end{example}

\section{Local inequality constraints}
\label{sec:ineq}

By inspection of \eqref{eq:xic}, we observe that \eqref{eq:xic} is a constraint optimisation problem due to the fact that we have included the primal constraints $(\forall{\ell_i} \in{\cal L}_i) \,\, h_{\ell_i} (x_i)\leq 0$ in the objective function  using the indicator function $\iota_{N_i}$.
For many practical problems, however, the constraint optimisation problem can be solved analytically and efficiently, as the following example shows. 
\begin{example}
Let $(\forall i\in\V) \,\,\H_i = S^n$ equipped with the scalar product $(X,Y) \mapsto {\rm tra}(X^TY)$, and consider the following consensus problem: 
\begin{equation}
\begin{array}{lll} \text{minimise} & \displaystyle \sum_{i\in\V} \frac{1}{2}\|X_i-Q_i\|^2 \\
\text{subject to} &(\forall i\in\V) &\hspace{-8mm}\left\{  \!\!\begin{array}{l} X_i  \in K_i, \\ (\forall {\ell_i} \in {\cal L}_i) \,\,\,{\rm tra}(H_{\ell_i} ^TX_i) = 1,\rule[4mm]{0mm}{0mm} \end{array}\right. \\
&(\forall \{i,j\}\in\E) &\hspace{-8mm}X_i=X_j.\rule[4mm]{0mm}{0mm}
\end{array}
\label{eq:Qx}
\end{equation}
By inspection of \eqref{eq:Qx} we note that the constraints $(\forall i\in\V) \, X_i \in K_i$ and $(\forall \{i,j\}\in\E) \, X_i=X_j$ will be handled by the GPDMM iterates, while the constraints $(\forall{\ell_i} \in{\cal L}_i) \,\,{\rm tra}(H_{\ell_i} ^TX_i) = 1$ appear as constraints in the update \eqref{eq:xic}. Since the objective function is quadratic, \eqref{eq:xic} is the minimisation over a quadratic function subject to affine constraints, which boils down to solving a set of linear equations. Indeed, note  
that the equality constraints over all edges imply that $(\forall \{i,j\}\in\E) \,\,A_{ij} = -A_{ji} = I, b_{ij}=0$, and thus that $C_i^*C_j = ({\rm deg}\,i)\delta_{ij}$, where $\delta_{ij}$ is the Kronekcer delta. The solution $X_i^{(k)}$ to \eqref{eq:xic} for problem \eqref{eq:Qx} therefore satisfies the following optimality conditions:
\begin{align}
1. \,\,\,&  (\forall i\in\V) (\forall {\ell_i} \in{\cal L}_i) \,\,\, {\rm tra}(H_{\ell_i} ^TX^{(k)}_i)  = 1, \label{eq:La}\\
2.\,\,\,&  (\forall i\in\V)  (\exists \,\gamma^{(k)}_i \in\R^{n_i})  \;  \nonumber\\
& (1+c\,{\rm deg}\,i) X_i^{(k)} - Q_i + C_i^*Z^{(k)}  + \sum_{{\ell_i} \in{\cal L}_i} (\gamma^{(k)}_i)_{\ell_i}  H_{\ell_i}  = 0. \nonumber
\end{align}
Hence, 
\begin{align*}
(\forall i\in\V) \,\,\, X_i^{(k)} &=  \left(Q_i - C_i^*Z^{(k)} -\sum_{{\ell_i} \in{\cal L}_i} (\gamma^{(k)}_i)_{\ell_i}   H_{\ell_i}  \right)/(1 +c\,{\rm deg}\,i) \\
&= \tilde{X}_i^{(k)} - \sum_{{\ell_i} \in{\cal L}_i} (\tilde{\gamma}^{(k)}_i)_{\ell_i}  H_{\ell_i} , 
\end{align*}
where $\tilde{X}_i^{(k)}$ is the solution of \eqref{eq:xic} in the absence of the constraints, and $\tilde{\gamma}^{(k)}_i = \gamma^{(k)}_i/(1+c\,{\rm deg}\,i)$.
The Lagrange multipliers $\tilde{\gamma}^{(k)}_i$ are found by solving \eqref{eq:La}. Let $(\forall i\in\V) \, G_i \in\R^{n_i\times n_i}$ having entries $(\forall m\in{\cal L}_i)(\forall n\in{\cal L}_i) \, G_{i_{m,n}}  = {\rm tra}(H^T_{\ell_m}H_{\ell_n})$. With this,  \eqref{eq:La} for all $i\in \mathcal{V}$ can be expressed as
\begin{equation}
\left( \begin{array}{c} 1\\\vdots\\1\end{array} \right) = 
\left( \!\begin{array}{c}  {\rm tra}(H_1^T{X}^{(k)}_i) \\ \vdots \\{\rm tra}(H_{n_i}^T{X}^{(k)}_i) \end{array}\! \right) = 
 \left( \!\begin{array}{c}  {\rm tra}(H_1^T\tilde{X}^{(k)}_i) \\ \vdots \\{\rm tra}(H_{n_i}^T\tilde{X}^{(k)}_i) \end{array}\! \right) -
 G_i \tilde{\gamma}^{(k)}_i.
   \label{eq:gamma}
\end{equation}
In conclusion, the constraints $(\forall {\ell_i} \in{\cal L}_i) \,\,{\rm tra}(H_{\ell_i} ^TX_i) = 1$ can be implemented by applying a simple correction factor $\sum_{{\ell_i} \in{\cal L}_i} (\tilde{\gamma}^{(k)}_i)_{\ell_i}  H_{\ell_i} $ to $\tilde{X}_i^{(k)}$, the solution of \eqref{eq:xic} in the absence of the constraints. A similar conclusion holds for semidefinite programs where the objective function is linear, and \eqref{eq:xic} is still a minimisation over a quadratic function subject to affine constraints. For more complex objective functions the inequality constraints can be solved locally using standard convex solvers, or we can replace the complex objective function at every iteration by a quadratic approximation, leading to simple analytic update equations as described above, a procedure that is also guaranteed to converge \cite{con:18b}.
\end{example}

\section{Convergence of the GPDMM algorithm}
\label{sec:convergence}

Let $T = R_{cT_2}\circ R_{cT_1}$. Since both $R_{cT_2}$ and $R_{cT_1}$ are nonexpansive, $T$ is nonexpansive, and  the sequence generated by the Banach–Picard iteration $z^{(k+1)} = Tz^{(k)}$ may fail to produce a fixed point of $T$. A simple example of this situation is $T=-I$ and $z^{(0)} \neq 0$.
Although operator averaging provides a means of ensuring algorithmic convergence, resulting in the Krasnosel'skii-Mann iterations $z^{(k+1)} = (1-\alpha)z^{(k)} + \alpha Tz^{(k)}$ with $\alpha\in (0,1]$, it is well known that Banach-Picard iterations converge provable faster than Krasnosel'skii-Mann iterations for the important class of quasi-contractive operators \cite{ber:04}. 
As discussed before, the Peaceman-Rachford splitting algorithm converges when $T_1$ is uniformly monotone. 
However, assuming finite dimensional Hilbert spaces, we have ${\rm dim}\,\bar{\G} > {\rm dim}\,\H$ for any practical network so that ${\rm ker}\,C^* \neq\emptyset$. 
Recall that $T_1 = -C \partial f^*(-C^*(\cdot)) + d$.
Hence, $(\exists (\bar{\lambda},\eta)\in \bar{\G}\times \bar{\G}) \; \bar{\lambda}\neq \eta \text{ and } C^*(\bar{\lambda}-\eta) = 0$, which prohibits $T_1$ of being strictly monotone, and thus uniformly monotone. 
It is therefore of interest to consider if there are milder conditions under which 
certain optimality can be guaranteed. Whilst such conditions may be restrictive in the case of convergence of the auxiliary variables, in the context of distributed optimisation we are often only interested in primal optimality. For this reason we define conditions that ensure $x^{(k)}\to x^*$ even if $z^{(k)} \not\to z^*, z^*\in {\rm fix}\,T$. 

\begin{proposition}
Let $T_1 = -C \partial f^*(-C^*(\cdot)) + d$ and $T_2 = \partial \iota_{M}$ such that ${\rm fix}\,T \neq \emptyset$ and $\partial f$ is uniformly monotone with modulus $\phi$, let $c\in\R_{++}$, and let $x^*$ be the solution to the primal problem \eqref{eq:primal}. Given the iterates \eqref{eq:iterates} and $z^{(0)} \in \bar{\G}$, we have $x^{(k)} \to x^*$.
\label{prop:conv}
\end{proposition}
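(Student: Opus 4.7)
The strategy is to extract strong primal convergence from uniform monotonicity of $\partial f$, even though the kernel of $C^*$ prevents $T_1$ from being uniformly (or even strictly) monotone. First I would fix any $z^* \in \mathrm{fix}\,T$, set $\bar{\lambda}^* = J_{cT_1}z^*$, and define the associated primal vector $x^*$ via $x^* \in \partial f^*(-C^*\bar{\lambda}^*)$, equivalently $-C^*\bar{\lambda}^* \in \partial f(x^*)$ and $\bar{\lambda}^* = z^* + c(Cx^* - d)$. Uniform monotonicity of $\partial f$ implies strict convexity of $f$, so $x^*$ is uniquely defined. Unpacking $z^* = R_{cT_2}(2\bar{\lambda}^* - z^*)$ in light of $R_{cT_2} = R_M$ shows $\bar{\lambda}^* \in M$, and a direct KKT check then identifies $x^*$ as the primal optimum of \eqref{eq:primal}.

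Next I would derive a one-step descent inequality. Because $J_{cT_1}$ is firmly nonexpansive, writing $a^{(k)} = z^{(k)} - z^*$ and $b^{(k)} = \bar{\lambda}^{(k)} - \bar{\lambda}^*$ gives the standard resolvent identity
\begin{equation*}
\|R_{cT_1}z^{(k)} - R_{cT_1}z^*\|^2 = \|a^{(k)}\|^2 - 4c\,\langle b^{(k)},\, u^{(k)} - u^*\rangle,
\end{equation*}
where $u^{(k)} = (z^{(k)}-\bar{\lambda}^{(k)})/c \in T_1\bar{\lambda}^{(k)}$ and analogously for $u^*$. By the structure of $T_1$, one has $u^{(k)} = d - Cx^{(k)}$ and $u^* = d - Cx^*$ with $-C^*\bar{\lambda}^{(k)} \in \partial f(x^{(k)})$ and $-C^*\bar{\lambda}^* \in \partial f(x^*)$, so
\begin{equation*}
\langle b^{(k)},\, u^{(k)} - u^*\rangle = \langle -C^*\bar{\lambda}^{(k)} - (-C^*\bar{\lambda}^*),\, x^{(k)} - x^*\rangle \geq \phi(\|x^{(k)} - x^*\|)
\end{equation*}
by uniform monotonicity of $\partial f$. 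Nonexpansiveness of $R_{cT_2}$ then yields
\begin{equation*}
\|z^{(k+1)} - z^*\|^2 \leq \|z^{(k)} - z^*\|^2 - 4c\,\phi(\|x^{(k)} - x^*\|).
\end{equation*}

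Telescoping produces $\sum_{k\geq 0}\phi(\|x^{(k)} - x^*\|) \leq \tfrac{1}{4c}\|z^{(0)} - z^*\|^2 < \infty$, so $\phi(\|x^{(k)} - x^*\|) \to 0$; since $\phi$ is increasing and vanishes only at zero, a short contradiction argument forces $\|x^{(k)} - x^*\| \to 0$. The main obstacle I anticipate is the KKT reconciliation in the first step: one must verify carefully that the fixed-point relation through the projection onto $M = \bar{K}^\circ \cap S$, together with the subgradient relation through $C^*$, yields complementary slackness $\langle Ax^* - b,\,\lambda^*\rangle = 0$ and the primal/dual feasibility that pin $x^*$ down as the unique primal minimiser. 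Everything else is bookkeeping using firm nonexpansiveness of $J_{cT_1}$, nonexpansiveness of $R_{cT_2}$, and the definition of uniform monotonicity.
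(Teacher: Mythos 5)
Your core argument is correct and coincides with the paper's: the descent inequality $\|z^{(k+1)}-z^*\|^2 \le \|z^{(k)}-z^*\|^2 - 4c\,\phi(\|x^{(k)}-x^*\|)$ is exactly \eqref{eq:monof}, which the paper obtains by nonexpansiveness of $R_{cT_2}$ followed by direct expansion of $R_{cT_1}=2J_{cT_1}-I$ and substitution of \eqref{eq:pdmmmu}; your firm-nonexpansiveness/resolvent-identity packaging is the same computation with the signs checked correctly, and the telescoping conclusion $\|x^{(k)}-x^*\|\to 0$ is identical. Where you genuinely diverge is the endgame, i.e., showing that the limit point is the primal optimum. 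You propose a static KKT verification at the fixed point: from $z^*=R_M(2\bar\lambda^*-z^*)$ one gets $P_M(2\bar\lambda^*-z^*)=\bar\lambda^*$, hence $\bar\lambda^*\in M$ and $c(Cx^*-d)\in\partial\iota_M(\bar\lambda^*)$, i.e., $0\in T_1\bar\lambda^*+T_2\bar\lambda^*$, after which one must decompose the normal cone of $M=\bar K^\circ\cap S$ to extract primal feasibility and complementary slackness --- the step you rightly flag as delicate. The paper instead argues dynamically: it combines two successive updates into \eqref{eq:zzg}, uses the convergence of $(\|z^{(k)}-z^*\|)_{k\in\N}$ together with $x^{(k)}\to x^*$ to force the increment to vanish, and then invokes Lemma~\ref{lemma:proj} and Moreau's decomposition to read off $P_{\bar K^\circ}(I+P)(Cx^*-d)=0$, i.e., $(I+P)(Cx^*-d)\in\bar K$ and hence $Ax^*-b\in K$ by \eqref{eq:CxAx}. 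The paper's route buys feasibility with no normal-cone calculus beyond the already-proved projection formula; your route, if completed, buys the cleaner structural statement that fixed points of $R_{cT_2}\circ R_{cT_1}$ map under $J_{cT_1}$ to zeros of $T_1+T_2$ and would make complementary slackness explicit, which the paper's proof leaves implicit when it asserts that $x^*$ is \emph{the} solution of \eqref{eq:primal} rather than merely a feasible limit. Neither choice is wrong, but if you keep your route you must actually carry out the decomposition $N_{\bar K^\circ\cap S}(\bar\lambda^*)=N_{\bar K^\circ}(\bar\lambda^*)+N_S(\bar\lambda^*)$ (or justify the requisite constraint qualification); as written, that step is announced but not performed.
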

\begin{proof}
Let $z^* \in {\rm fix}\,T$. We have for all $k\in\N$,
\begin{align}
&\|z^{(k+1)} - z^*\|^2 \nonumber \\
&= \|(R_{cT_2}\circ R_{cT_1})z^{(k)}  - (R_{cT_2}\circ R_{cT_1})z^*\|^2 \nonumber \\
&\leq \|R_{cT_1} z^{(k)} -  R_{cT_1} z^*\|^2  \nonumber \\
&= \|2\bar{\lambda}^{(k)} - z^{(k)} - (2\bar{\lambda}^* - z^*)\|^2 \nonumber \\
&= \|z^{(k)}-z^*\|^2 -4\,\langle \bar{\lambda}^{(k)} -\bar{\lambda}^*, z^{(k)}-\bar{\lambda}^{(k)} - (z^*-\bar{\lambda}^*)\rangle  \nonumber \\
&=  \|z^{(k)}-z^*\|^2+ 4c\,\langle \bar{\lambda}^{(k)} -\bar{\lambda}^*, C(x^{(k)}-x^*)\rangle,
\label{eq:zCT}
\end{align}
where the last equality follows from \eqref{eq:pdmmmu}. Moreover, since $x^{(k)}$ minimises $f(x) + \langle z^{(k)},Cx\rangle + \frac{c}{2}\|Cx-d\|^2$, we have that
$0\in \partial f(x^{(k)}) + C^*z^{(k)} + cC^*(Cx^{(k)}-d) =  \partial f(x^{(k)}) + C^*\bar{\lambda}^{(k)}$, so that
\eqref{eq:zCT} can be expressed as
\begin{align}
\|z^{(k+1)} - z^*\|^2 &\leq  \|z^{(k)}-z^*\|^2- 4c\,\langle \partial f(x^{(k)}) -\partial f(x^*), x^{(k)}-x^*\rangle  \nonumber \\
&\leq  \|z^{(k)}-z^*\|^2 - 4c \phi(\|x^{(k)}-x^*\|).
\label{eq:monof}
\end{align}
Hence, $\phi(\|x^{(k)}-x^*\|)\to 0$ and, in turn, $\|x^{(k)}-x^*\|\to 0$. 

To show that $x^*$ is primal feasible, we consider two successive 
$z$-updates:
\begin{align}
z^{(k+1)} &= R_M\left(z^{(k)} + 2c(Cx^{(k)}-d)\right) \nonumber \\
&= z^{(k-1)} + 2c\left((Cx^{(k-1)} -d)+ R_M( Cx^{(k)} - d)\right).
\label{eq:zzg}
\end{align}
Subtracting $z^*$ from both sides of \eqref{eq:zzg} and observing from \eqref{eq:monof} that $(\|z^{(k)} - z^*\|^2)_{k\in\N}$ converges, we conclude that $0 = Cx^{*} -d + R_M( Cx^{*} - d) = 2 P_M(Cx^{*} - d ) =  P_{\bar{K}^\circ} (I+P)(Cx^*-d)$. Hence, $(I+P)(Cx^*-d) \in \bar{K}$ and thus $Ax^*-b\in K$ by \eqref{eq:CxAx}, which completes the proof.
\end{proof}
\begin{remark}
Since $T$ is at best nonexpansive, the auxiliary variables will not converge in general. In fact, from \eqref{eq:zzg} we conclude that 
$(z^{(2k)})_{k\in\N}$ and $(z^{(2k+1)})_{k\in\N}$ converges, and similarly for $y$ and $\bar{\lambda}$. 
\end{remark}

\section{Stochastic coordinate descent}
\label{sec:stoch}

In practice, synchronous algorithm operation necessitates a global clocking system to coordinate actions among nodes. Clock synchronisation, however, in particular in large-scale heterogeneous sensor networks, can be challenging. In addition, in such heterogeneous environments where sensors or agents vary greatly in processing capabilities, processors that are fast either because of high computing power or because of small workload per iteration are often forced to idle while waiting for slower processors to catch up. Asynchronous algorithms offer a solution to these issues by providing greater flexibility in leveraging information received from other processors, thereby mitigating the constraints imposed by synchronous operation.
In order to obtain an asynchronous (averaged) GPDMM algorithm, we will apply randomised coordinate descent to the algorithm presented in Section~\ref{sec:operator}. 

Stochastic updates can be defined by assuming that each auxiliary variable $z_{i|j}$ can be updated based on a Bernoulli random variable $\chi_{i|j}\in\{0,1\}$. Let $\chi = (\chi_{i|j})_{(i,j)\in\bar{\E}}$, and let
$(\chi^{(k)})_{k\in\mathbb{N}}$ be an i.i.d.\ random process defined on a common probability space $(\Omega, {\cal A},{\cal P})$, such that $\chi^{(k)} : (\Omega,{\cal A}) \to \{0,1\}^{|\bar{\E}|}$. 
Hence, $\chi^{(k)}(\omega) \subseteq \{0,1\}^{|\bar{\E}|}$ indicates  which entries of $z^{(k)}$ will be updated at iteration $k$. We assume that the following condition holds:
\begin{equation}
    (\forall (i,j) \in \bar{\E}) \quad {\cal P}(\{\chi_{i|j}^{(0)} = 1\})>0.
\label{eq:xiex}
\end{equation}
Since $(\chi^{(k)})_{k\in\mathbb{N}}$ is i.i.d., \eqref{eq:xiex} guarantees that at every iteration, entry $z_{i|j}^{(k)}$ has nonzero probability to be updated. We define the stochastic operator $U: \bar{\G}\to \bar{\G} : (z_{i|j})_{(i,j)\in\bar{\E}} \mapsto (\chi_{i|j}z_{i|j})_{(i,j)\in\bar{\E}}$.
With this,  we define the {\em stochastic} Banach-Picard iteration \cite{jor:23} as
\begin{equation}
    Z^{(k+1)} =  T_{\chi^{(k)}} Z^{(k)} = \big(I-U^{(k)}\big)Z^{(k)}+U^{(k)}(TZ^{(k)}),
\label{eq:sbp}  
\end{equation}
where $Z^{(k)}$ denotes the random variable having realisation $z^{(k)}$.
If $T$ is $\alpha$-averaged, a convergence proof is given in \cite{lut:13,bia:16}, where it is shown that $Z^{(k)}  - T_{\chi^{(k)}} Z^{(k)} \stackrel{\rm a.s.}{\to} 0$  (almost surely).
If $T$ is not averaged, we do not have convergence in general since $T$ is at best nonexpansive and we need additional conditions for convergence.  We have the following convergence result for stochastic GPDMM.
\begin{proposition}
Let $T_1 = -C \partial f^*(-C^*(\cdot)) + d$ and $T_2 = \partial \iota_{M}$ such that ${\rm fix}\,T \neq \emptyset$ and $\partial f$ is uniformly monotone with modulus $\phi$, let $c\in\R_{++}$, and let $x^*$ be the solution to the primal problem \eqref{eq:primal}. Given the stochastic iteration \eqref{eq:sbp} and $z^{(0)} \in \bar{G}$, we have $X^{(k)} \stackrel{\rm a.s.}{\to} x^*$.    
\label{prop:sconv}
\end{proposition}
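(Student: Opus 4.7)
The plan is to mimic the deterministic argument of Proposition~\ref{prop:conv}, but using stochastic Fej\'er monotonicity in a carefully weighted norm, and then invoke the Robbins--Siegmund supermartingale convergence theorem to convert the summability of $\phi(\|X^{(k)}-x^*\|)$ into almost-sure convergence of the primal iterates. Throughout I let $\mathcal{F}_k = \sigma(Z^{(0)},\chi^{(0)},\ldots,\chi^{(k-1)})$ and write $p_{i|j} = \mathcal{P}(\{\chi_{i|j}^{(0)}=1\}) > 0$, which is strictly positive by \eqref{eq:xiex}.

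First I would exploit the coordinate-wise structure of the update \eqref{eq:sbp}. Because $U^{(k)}$ and $I-U^{(k)}$ act on disjoint coordinates of $\bar{\G}$ and because $Tz^*=z^*$, one has coordinate-wise
\[
\|Z_{i|j}^{(k+1)}-z^*_{i|j}\|^2 = (1-\chi_{i|j}^{(k)})\|Z_{i|j}^{(k)}-z^*_{i|j}\|^2 + \chi_{i|j}^{(k)}\|(TZ^{(k)})_{i|j}-z^*_{i|j}\|^2.
\]
To handle the possibly unequal update probabilities I introduce the weighted squared norm $\|z\|_W^2 = \sum_{(i,j)\in\bar{\E}} p_{i|j}^{-1}\|z_{i|j}\|^2$. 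Taking conditional expectation of $\|Z^{(k+1)}-z^*\|_W^2$, the $1/p_{i|j}$ factors cancel the Bernoulli probabilities and a direct calculation yields
\[
\mathbb{E}\bigl[\|Z^{(k+1)}-z^*\|_W^2 \,\big|\, \mathcal{F}_k\bigr] = \|Z^{(k)}-z^*\|_W^2 + \|TZ^{(k)}-z^*\|^2 - \|Z^{(k)}-z^*\|^2.
\]

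Next I would apply the deterministic bound already established inside the proof of Proposition~\ref{prop:conv} (combine \eqref{eq:zCT} with the uniform monotonicity of $\partial f$ exactly as in \eqref{eq:monof}), which gives $\|TZ^{(k)}-z^*\|^2 \leq \|Z^{(k)}-z^*\|^2 - 4c\,\phi(\|X^{(k)}-x^*\|)$. Substituting into the previous identity produces the stochastic Fej\'er-type inequality
\[
\mathbb{E}\bigl[\|Z^{(k+1)}-z^*\|_W^2 \,\big|\, \mathcal{F}_k\bigr] \leq \|Z^{(k)}-z^*\|_W^2 - 4c\,\phi(\|X^{(k)}-x^*\|),
\]
so $(\|Z^{(k)}-z^*\|_W^2)_{k\in\N}$ is a nonnegative supermartingale. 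The Robbins--Siegmund theorem then yields (i) almost sure convergence of $\|Z^{(k)}-z^*\|_W^2$ and (ii) $\sum_{k=0}^{\infty} \phi(\|X^{(k)}-x^*\|) < +\infty$ almost surely, hence $\phi(\|X^{(k)}-x^*\|) \stackrel{\rm a.s.}{\to} 0$. Since $\phi$ is increasing and vanishes only at $0$, this forces $X^{(k)} \stackrel{\rm a.s.}{\to} x^*$.

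The main obstacle, and the reason the weighted norm is essential, is that the update probabilities $p_{i|j}$ need not be equal, so the naive (unweighted) conditional expectation would only give a convex combination of $\|Z^{(k)}-z^*\|^2$ and $\|TZ^{(k)}-z^*\|^2$ coordinate-wise, which does not combine cleanly with the deterministic decrease estimate. Reweighting by $1/p_{i|j}$ is the standard randomised-coordinate-descent device that makes the cross terms cancel and recovers the full deterministic decrease $\|TZ^{(k)}-z^*\|^2 - \|Z^{(k)}-z^*\|^2$ on the right-hand side. Feasibility of $x^*$ itself is inherited from Proposition~\ref{prop:conv} applied to any deterministic limit point (or, equivalently, by repeating the two-step argument \eqref{eq:zzg} along a suitable almost-sure subsequence), and so does not need to be reproved here.
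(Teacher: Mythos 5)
Your proposal is correct and follows essentially the same route as the paper: the $p_{i|j}^{-1}$-weighted norm, the coordinate-wise conditional-expectation identity reducing to $\|TZ^{(k)}-z^*\|^2-\|Z^{(k)}-z^*\|^2$, and the deterministic decrease from Proposition~\ref{prop:conv} yielding a nonnegative supermartingale are all exactly the paper's steps. The only (cosmetic) difference is the closing argument: you invoke Robbins--Siegmund to obtain almost-sure summability of $\phi(\|X^{(k)}-x^*\|)$ directly, whereas the paper takes full expectations and applies Markov's inequality together with the Borel--Cantelli lemma; both finish the proof.
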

\begin{proof}
Let $\bar{\E}_\chi =  \{(i,j) \in\bar{\E} \,|\, \chi_{i|j}=1\}$ and let  $p_{i|j} = {\cal P}(\{\chi_{i|j}^{(0)} = 1\})$. Moreover, let $S = \{0,1\}^{|\bar{\E}|}$ so that $(\forall \chi\in S) \; p_\chi = \sum_{(i,j)\in \bar{\E}_\chi} p_{i|j}$.  We define $\langle \cdot\,,\, \cdot\rangle_{\cal P}$ as 
\[
(\forall u\in\bar{\G})(\forall v\in\bar{\G}) \quad \langle u,v\rangle_{\cal P} = \!\!\!\sum_{(i,j)\in\bar{\E}} p_{i|j}^{-1} \langle u_{i|j},v_{i|j} \rangle.
\]
In addition, let  ${\cal A}_k := \sigma\{\chi^{(t)}: t\leq k\}$, the $\sigma$-algebra generated by the random vectors $\chi^{(0)},\ldots,\chi^{(k)}$. Since  ${\cal A}_k \subseteq {\cal A}_l$ for $k\leq l$, the sequence $({\cal A}_k)_{k\in\N}$ is a filtration on $(\Omega,{\cal A})$.
For any $z^* \in {\rm fix}\,T$ we have (see also \cite{lut:13,bia:16})
\begin{align}
\mathbb{E} &\left( \|Z^{(k+1)} - z^*\|_{{\cal P}}^2 \, | \, {\cal A}_k \right)  =  \sum_{\chi\in S} p_\chi \|T_{\chi}Z^{(k)}-z^*\|^2_{\cal P} \nonumber \\
&\hspace{0mm}=  \sum_{\chi\in S} p_\chi  \bigg( \sum_{(i,j)\in\bar{\E}_\chi}  \|(T Z^{(k)})_{i|j}-z_{i|j}^*\|^2_{\cal P}  + \sum_{(i,j)\not\in\bar{\E}_\chi}  \|Z_{i|j}^{(k)}-z_{i|j}^*\|^2_{\cal P} \bigg) \nonumber \\
&\hspace{0mm}=   \|Z_{i|j}^{(k+1)}-z_{i|j}^*\|_{{\cal P}}^2 \nonumber \\
&\hspace{3mm}+ \sum_{\chi\in S} p_\chi  \sum_{(i,j)\in\bar{\E}_\chi}p_{i|j}^{-1}  \left(   \|(T Z^{(k)})_{i|j}-z_{i|j}^*\|^2 -  \|Z_{i|j}^{(k)}-z_{i|j}^*\|^2 \right)  \nonumber \\
&\hspace{0mm}=  \|Z_{i|j}^{(k+1)}-z_{i|j}^*\|_{{\cal P}}^2 +  \sum_{(i,j)\in\bar{\E}} \left(  \|(T Z^{(k)})_{i|j}-z_{i|j}^*\|^2 -  \|Z_{i|j}^{(k)}-z_{i|j}^*\|^2 \right)  \nonumber \\
&\hspace{0mm}= \|Z^{(k)} - z^*\|_{{\cal P}}^2 +  \| TZ^{(k)} - z^*\|^2 - \|Z^{(k)} - z^*\|^2.
\label{eq:Eless}
\end{align}
Using \eqref{eq:monof}, \eqref{eq:Eless} becomes
\begin{align}
\mathbb{E} \left( \|Z^{(k+1)} - z^*\|_{{\cal P}}^2\right. & \left. \!\!| \, {\cal A}_k \right) \leq \|Z^{(k)} - z^*\|_{{\cal P}}^2 - 4c\phi(\|X^{(k)}-x^*\|),
\label{eq:mart}
\end{align}
which shows that $(\|Z^{(k)} - z^*\|_{{\cal P}}^2)_{k\geq 1}$ is a nonnegative supermartingale. Moreover, since  $(\,\cdot\,)^{1/2}$ is concave and nondecreasing on $\R_+$, we conclude 
that $(\|Z^{(k)} - z^*\|_{{\cal P}})_{k\geq 1}$ is a nonnegative supermartingale as well and therefore converges almost surely by the martingale convergence theorem \cite[Theorem 27.1]{jac:04}. 
Taking expectations on both sides of \eqref{eq:mart} and iterating over $k$, 
we obtain 
\[
    \mathbb{E}\left(\|Z^{(k+1)} - z^*\|_{{\cal P}}^2\right) \leq \|z^{(0)} - z^*\|_{{\cal P}}^2 - 4c\sum_{t=1}^{k}\mathbb{E}\left(\phi(\|X^{(t)}-x^*\|)\right),
\]
so that
\begin{equation*}
    \sum_{t=1}^{k}\mathbb{E}\left(\phi(\|X^{(t)}-x^*\|)\right)\leq\frac{1}{4c}\|z^{(0)} - z^*\|_{{\cal P}}^2<\infty,
\end{equation*}
which shows that the sum of the expected values of the primal error is bounded. Hence, using Markov's inequality, we conclude that
\[
  \sum_{t=1}^\infty {\rm Pr}\left\{\|X^{(t)}-x^*\|^2\geq\epsilon\right\} \leq\frac{1}{\epsilon} \sum_{t=1}^\infty\mathbb{E}\left[\|X^{(t)}-x^*\|^2\right]<\infty,
\]
for all $\epsilon>0$,
and by Borel Cantelli's lemma \cite[Theorem 10.5]{jac:04}  that
\[
    {\rm Pr}\left\{\limsup_{k\rightarrow\infty}\left(\|X^{(k)}-x^*\|^2\geq\epsilon\right)\right\}=0,
\]
which shows that $\|X^{(k)}-x^*\|^2\stackrel{\rm a.s.}{\to} 0$.  The proof that $x^*$ is primal feasible is identical to the one given in the proof of Theorem~\ref{prop:conv}.
\end{proof}

\section{Applications and numerical experiments}
\label{sec:num}

In this section we will discuss experimental results obtained by Monte Carlo simulations. We will start in Section~\ref{sec:cons} by showing convergence results for both synchronous and asynchronous updating schemes, where we optimise a quadratic objective function subject to different cone constraints. Next, in Section~\ref{sec:mimo}, we will show simulation results for a practical communication application of decentralised multiple-input multiple-output (MIMO) maximum likelihood (ML) detection is an ad-hoc communication network. The problem turns out to be NP-hard but can be well approximated through semidefinite relaxation techniques.

\subsection{Cone constrained consensus problem}
\label{sec:cons}

\begin{figure}[t]
\centering
\includegraphics[width=.6\textwidth]{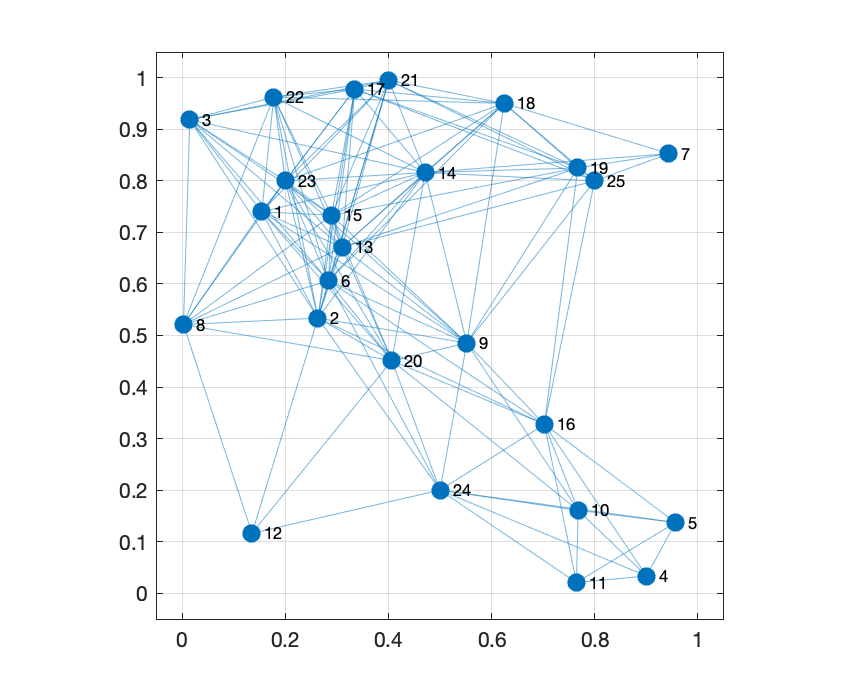}
\vspace{-.7\baselineskip}
\caption{Demonstration of a random geometric graph with 25 nodes.} 
\label{fig:graph_25nodes}
\end{figure}
In our first computer simulation
we consider a random geometric graph of $N=25$ nodes where we have set the communication radius $r = \sqrt{2\log(N)/N}$, thereby guaranteeing a connected graph with probability at least $1-1/N^2$ \cite{dal:02}.
The resulting graph is depicted in Fig.~\ref{fig:graph_25nodes}.

Let $m$ and $n$ be strictly positive 
integers. Let $(\forall i\in\V) \, \H_i = \R^{m\times n}$ equipped with the scalar product $(X,Y) \mapsto {\rm tra}(X^TY)$. The associated norm is the Frobenius norm $\|\cdot\|_{\rm F}$. 
We consider the following consensus problem:
\begin{equation}
\begin{array}{lll} \text{minimise} & \displaystyle\sum_{i\in\V}\|X_i-A_i\|^2_{\rm F} \\ 
\text{subject to} &(\forall \{i,j\}\in {\cal E}) &\hspace{-6mm}X_i=X_j,
\rule[4mm]{0mm}{0mm} \\
 &(\forall i \in\V) &\hspace{-6mm}X_i \in K_i,
\rule[4mm]{0mm}{0mm} 
\end{array}
\label{eq:fro}
\end{equation}
where the data $A_i\in\H_i$ was randomly generated from a zero mean, unit variance Gaussian distribution. Problem \eqref{eq:fro} is in standard form and can be solved directly with the GPDMM algorithm. 
We will consider two scenarios.  Let $\bar{A} = |\V|^{-1}\sum_{i\in\V}A_i$. The first scenario is the one in which we set 
$(\forall i\in\V) \, K_i = \R_+^{m\times n}$, the cone of real nonnegative matrices, and the  solution to \eqref{eq:fro} is given by $(\forall i\in\V) \, X_i^* = \bar{A}_+$, where $A_+$ is the matrix obtained from $A$ by setting the negative entries to $0$. Fig.~\ref{fig:fro}a) shows convergence results for $m=5$ and $n=10$.
Results are shown for both synchronous (solid lines) and asynchronous (dashed lines) update schemes, and for $\alpha \in\{\frac{1}{2},1\}$. The case $\alpha=1$ corresponds to the non-averaged Picard-Banach iterations (Peaceman-Rachford splitting algorithm), while $\alpha=\frac{1}{2}$ corresponds to the Krasnosel'skii-Mann iterations (Douglas-Racheford splitting algorithm). 
\begin{figure}[t]
\centering
\includegraphics[width=.49\textwidth]{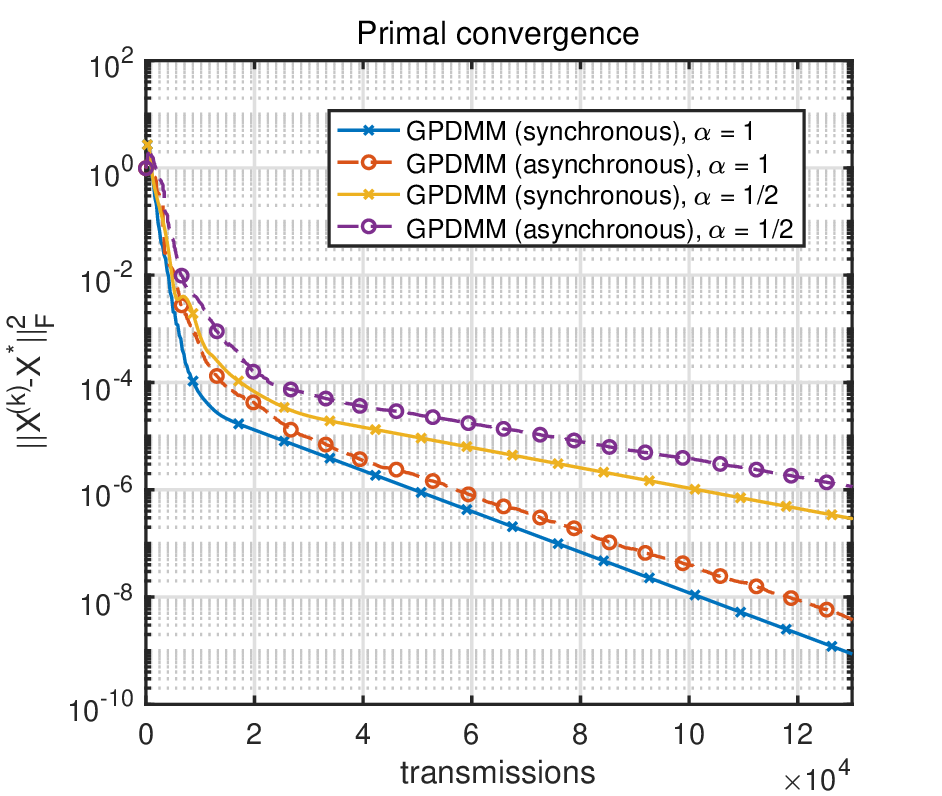}
\includegraphics[width=.49\textwidth]{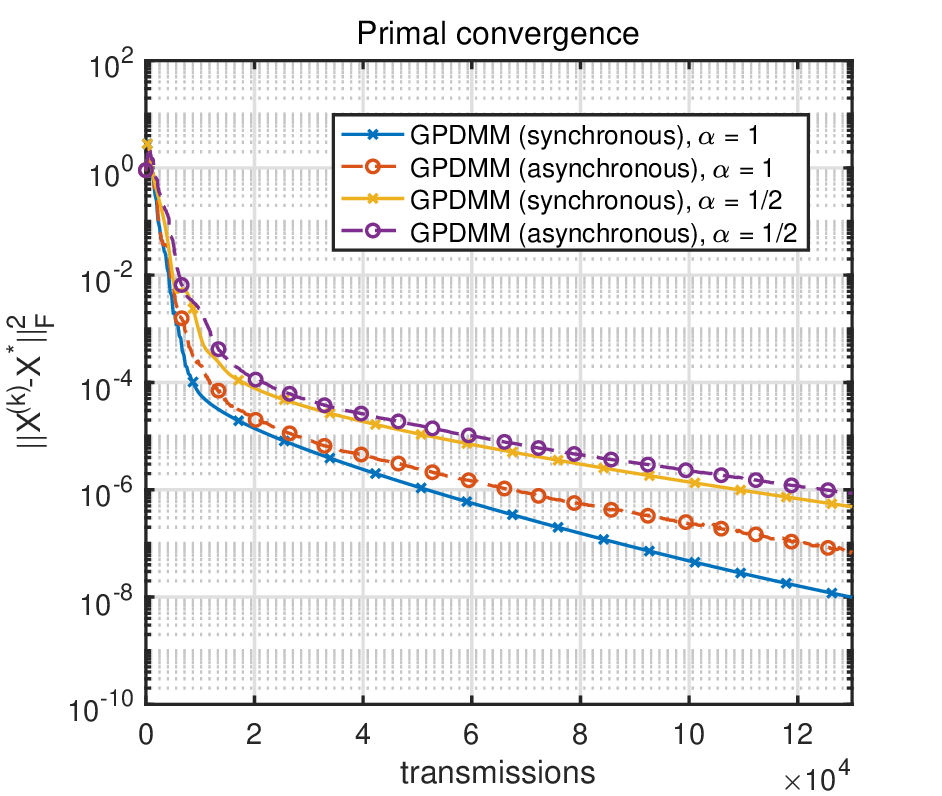}\\
\hspace{3mm} $a) \hspace{63mm} b)$ \vspace{-.3\baselineskip}
\caption{Convergence results for GPDMM for the consensus problem \eqref{eq:fro} over the graph depicted in Fig.~\ref{fig:graph_25nodes} for (a) $(\forall i\in\V) \, K_i = \R_+^{5\times 10}$ and (b) $(\forall i\in\V) \, K_i = S^{10}_+ $.} 
\label{fig:fro}
\end{figure}
In the second scenario, we 
 consider the case where $(\forall i\in\V) \, \H_i = S^{n}$, again equipped with the scalar product $(X,Y) \mapsto {\rm tra}(X^TY)$ and 
$(\forall i\in\V) \, K_i = S^{n}_+$, and the  solution to \eqref{eq:fro} is given by $(\forall i\in\V) \, X_i^* = Q\Lambda_+ Q^T$, where $Q\Lambda Q^T$ is the eigenvalue decomposition of $\bar{A}$. Fig.~\ref{fig:fro}b) shows convergence results for  $n=10$. The figures show that synchronous and asynchronous implementations have similar convergence results and that, as expected, averaging slows down the convergence rate.

\subsection{Decentralised approximation algorithms for max-cut problems}
\label{sec:maxcut}


Given an undirected graph $G = (\V,\E)$, we define nonnegative weights $(\forall \{i,j\}\in\E) \, w_{ij} = w_{ji}$ on the edges in the graph, and we set $(\forall \{i,j\} \not\in\E) \, w_{ij} = w_{ji} = 0$ when there is no edge between nodes. The maximum cut (max-cut) problem is that of finding the set of vertices $S \subseteq \V$ that maximises the weight of the edges in the cut $(S,\bar{S})$, where $\bar{S} = \V\backslash S$. The maximum cut problem is known to be NP-hard and a typical approach to solving such a problem is to find a $p$-approximation algorithm which is a polynomial-time algorithm that delivers a solution of value at least $p$ times the optimal value. Goemans and Williamson \cite{goe:95} gave the $0.878$-approximation algorithm  based on semidefinite programming and randomised rounding. This is the best known approximation guarantee for the max-cut problem today.

The max-cut problem is given by the following integer quadratic program
\begin{equation}
\begin{array}{ll} \text{maximise} & \displaystyle \frac{1}{4} \sum_{i\in\V} \sum_{j\in{\cal N}_i} w_{ij} ( 1 - x_ix_j) \\
\text{subject to} & (\forall i\in\V) \,\, x_i \in  \{-1,1\}.\rule[4mm]{0mm}{0mm}
\end{array}
\label{eq:maxcut}
\end{equation}

We can find an approximate solution of \eqref{eq:maxcut} by reformulating the problem as a homogeneous (nonconvex) QCQP and solve a semidefinite relaxed version of it. To do so, let $n = |\V|$, and let $x  = (x_1,\ldots, x_n)^T$. Ignoring the scaling, \eqref{eq:maxcut} can be equivalently expressed as
\begin{equation}
\begin{array}{ll} \text{maximise} & x^TW x \\
\text{subject to} & x \in  \{-1,1\}^{n},\rule[4mm]{0mm}{0mm}
\end{array}
\label{eq:mcQCQP}
\end{equation}
where $W\in \R^{n\times n}$ given by
\[
(\forall i\in\V)(\forall j\in\V) \,\,\,\,W_{ij} =  \left\{ \begin{array}{ll}\displaystyle \sum_{j\in{\cal N}_i} w_{ij}, & \text{if }  i=j,\\ -w_{ij}, & \text{if }  i\neq j, \{i,j\}\in\E, \rule[4mm]{0mm}{0mm} \\
0, & \text{if } i\neq j, \{i,j\}\not\in\E. \rule[4mm]{0mm}{0mm}\end{array} \right.
\]
Note that $W\succeq 0$.
Let ${\cal L} = \{1,\ldots,n\}$. 
Since $x^TWx = {\rm tra}(x^TWx) =  {\rm tra}(Wxx^T)$, we can rewrite \eqref{eq:mcQCQP} as
\begin{equation}
\begin{array}{ll} \text{maximise} & \displaystyle  {\rm tra}(WX) \\
\text{subject to} & X \succeq 0, \rule[3.5mm]{0mm}{0mm}\\
& {\rm rank}(X) = 1, \rule[3.5mm]{0mm}{0mm}\\
&(\forall \ell\in{\cal L}) \,\,\, X_{\ell\ell} = 1.\rule[3.5mm]{0mm}{0mm}
\end{array}
\label{eq:mctra}
\end{equation}
The condition $X\succeq 0$ ($X$ positive semidefinite) in combination with ${\rm rank}(X)=1$ implies $X = xx^T$. The usefulness of expressing the original max-cut problem \eqref{eq:maxcut} in the form \eqref{eq:mctra} lies in the fact that \eqref{eq:mctra} allows us to identify the fundamental difficulty in solving
\eqref{eq:maxcut}, which is the nonconvex rank constraint; the objective function as well as the other constraints are convex.t Hence, we can directly relax this problem into a convex problem by ignoring the nonconvex constraint ${\rm rank}(X)=1$. We then get a upper bound on the optimal value of 
\eqref{eq:maxcut} by solving
\begin{equation}
\begin{array}{ll} \text{maximise} & \displaystyle  {\rm tra}(WX) \\
\text{subject to} & X \succeq 0, \rule[3.5mm]{0mm}{0mm}\\
&(\forall \ell\in{\cal L}) \,\,\, X_{\ell\ell} = 1,\rule[3.5mm]{0mm}{0mm}
\end{array}
\label{eq:SDP}
\end{equation}
which is called the SDP relaxation of the original nonconvex QCQP. There exist several strong results on the upper bound of the gap between the optimal solution and the solution obtained from semidefinite relaxations to NP-hard problems  \cite{goe:95,kar:94}.

We can solve \eqref{eq:mctra} in a distributed fashion. Assume each node $i$ has only knowledge of the weights $(w_{ij})_{j\in{\cal N}_i}$. 
To decouple the node dependencies, we introduce local copies $X_i$ of $X$, and require that $(\forall \{i,j\}\in\E) \,\, X_i = X_j$. With this, the distributed version of \eqref{eq:SDP}  can be expressed as
\begin{equation}
\begin{array}{lll} \text{maximise} & \displaystyle \sum_{i\in\V} {\rm tra}(W_iX_i) \\
\text{subject to} & (\forall i\in\V) &\hspace{-4mm}\left\{ \!\!\begin{array}{l} X_i \succeq 0,\rule[3.5mm]{0mm}{0mm} \\  
(\forall \ell\in{\cal L}) \,\,(X_i)_{\ell\ell} = 1, \rule[3.5mm]{0mm}{0mm} \end{array}\right. \\
&(\forall \{i,j\}\in\E) &\hspace{-4mm}X_i=X_j, \rule[3.5mm]{0mm}{0mm}
\end{array}
\label{eq:SDP_d}
\end{equation}
where $\forall i\in\V, \forall m\in\V, \forall j\in\V $, we have 
\begin{align}
&(W_i)_{mj} =  \left\{ \begin{array}{ll}\displaystyle \sum_{j\in{\cal N}_i} w_{mj}, & \text{if } m=j=i,\\ -\frac{1}{2} w_{mj}, & \text{if }  m=i, j\in{\cal N}_i, \rule[4mm]{0mm}{0mm} \\ 
-\frac{1}{2} w_{mj}, & \text{if }  j=i, m\in{\cal N}_i, \rule[4mm]{0mm}{0mm} \\
0, & \text{otherwise.} \rule[4mm]{0mm}{0mm}\end{array} \right. \nonumber
\end{align}
Hence, $(\forall i\in\V) \, W_i \succeq 0$ and $W_i$ only contains valuesf $(w_{ij})_{j\in{\cal N}_i}$ which are locally known to node $i$. Moreover, $\sum_{i \in\V} W_i = W$, so that problem \eqref{eq:SDP_d} is equivalent to problem \eqref{eq:SDP}. Hence, by
Proposition~\ref{prop:conv} (or Proposition~\ref{prop:sconv} in case of stochastic updates), the distributed solution will converge to the centralised one.

Problem \eqref{eq:SDP_d} is of the form of our prototypical problem, where we can express the constraint $(X_i)_{\ell\ell}=1$  as $h_\ell(X_i) = {\rm tra}(E_\ell X_i) = 1$, where $E_\ell$ is a matrix which is zero everywhere except for entry $(\ell,\ell)$, which is 1. 
As explained in Section~\ref{sec:operator}, these constraints can be easily implemented as correction terms to the solution of the unconstraint updates of the primal variables. Since ${\rm tra}(E_\ell E_m) = \delta_{\ell m}$, \eqref{eq:gamma} becomes $(\tilde{\gamma}_i^{(k)})_\ell = (\tilde{X}_i^{(k)})_{\ell\ell}-1$, and the updates \eqref{eq:xic} become
\[
(\forall i\in\V) \;\;\;
\left\{ \begin{array}{ll} \tilde{X}_i^{(k)} = (W_i - C_i^*Z^{(k)})/(c\,{\rm deg}\,i), \\
 X_i^{(k)} =\tilde{X}_i^{(k)} - \displaystyle \sum_{\ell\in{\cal L}} \big((\tilde{X}_i^{(k)})_{\ell\ell}-1\big)E_\ell. \rule[6mm]{0mm}{0mm}
 \end{array}
 \right.
\]
Hence, the constraints $(\forall \ell\in{\cal L}) \,\,(X_i)_{\ell\ell} = 1$ can be implemented 
by simply setting, at each iteration $k$, the diagonal elements of $\tilde{X}_i^{(k)}$ equal to 1.

Note that due to the relaxation, we have in general ${\rm rank}\, X^* \neq 1$, and we have to extract an approximate solution $\tilde{x}$ from $X^*$ that is feasible. To do so, we can use randomised rounding \cite{goe:95,par:18}. Consider the optimisation problem
\[
\begin{array}{ll} \text{maximise} & \displaystyle \mathbb{E}_{\xi\sim{\cal N}(0,\Sigma)} \left(\xi^TW\xi\right) \\
\text{subject to} & (\forall \ell\in{\cal L}) \,\,\,  \mathbb{E}_{\xi\sim{\cal N}(0,\Sigma)} \left(\xi^2_{\ell}\right) = 1,\rule[4mm]{0mm}{0mm}
\end{array}
\]
where $\mathbb{E}_{x\sim P} f(x) = \int f(x)  {\rm d}P(x)$. This can be equivalently expressed as
 \[
\begin{array}{ll} \text{maximise} & \displaystyle {\rm tra}( W\Sigma)\\
\text{subject to} & \Sigma \succeq 0, \rule[3.5mm]{0mm}{0mm}\\
& (\forall \ell\in{\cal L}) \,\,\, \Sigma_{\ell\ell} = 1,\rule[3.5mm]{0mm}{0mm}
\end{array}
\]
which is of the form \eqref{eq:SDP} with $X=\Sigma$ and leads to the following rounding procedure: starting from a solution
$X^*$ from \eqref{eq:SDP}, we randomly draw several samples $\xi^{(j)} \sim {\cal N}(0,X^*)$, set $\tilde{x}^{(j)} = {\rm sign}\,(\xi^{(j)})$,  and keep the $\tilde{x}^{(j)}$ with the largest objective value. Even though this approach computes good approximate solutions with, in some cases, hard bounds on their suboptimality, it is not suitable for a distributed implementation since after the generation of the random samples we need to compute the objective function for selecting the best candidate which requires knowledge of $W$.
Alternatively, a simple heuristic for finding a good partition is to solve the SDPs above  and approximate $X^*$ by the best rank-1 approximation. That is, we set $\hat{X}= \lambda_1 q_1 q_1^T$, where $\lambda_1$  
is the largest eigenvalue of $X^*$ and $q_1$ the corresponding eigenvector, and we may define $\tilde{x} = {\rm sign}\,(q_1)$ as our candidate solution to \eqref{eq:maxcut}.

Fig.~\ref{fig:mc} shows convergence results for the max-cut problem \eqref{eq:SDP_d} over the graph depicted in Fig.~\ref{fig:graph_25nodes}. The (symmetric) weights are drawn uniformly at random from the set $\{0,\ldots,9\}$ and the step size parameter $c$ was set to $c = 1$. Fig.~\ref{fig:mc}(a) shows convergence result of the primal variable, averaged over all entries of $X$ and all 25 nodes. The objective value of the solution using the best rank-1 approximation was in this case 398, while the objective value using the randomised rounding method, generating 500 samples, was 401.  Fig.~\ref{fig:mc}(b) shows the distribution of 500 objective values for points sampled using randomised rounding.
\begin{figure}[t]
\centering
\includegraphics[width=.47\textwidth]{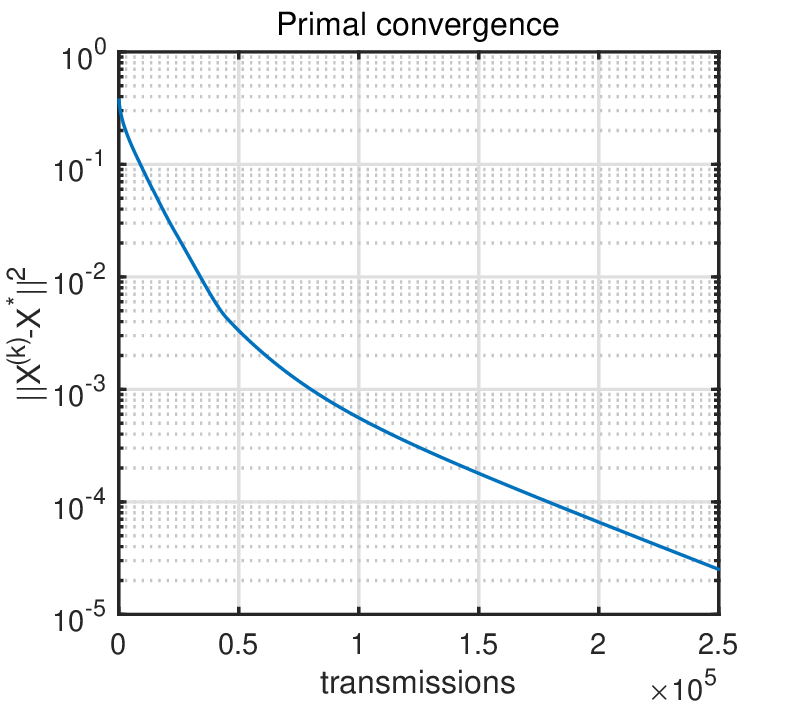}
\includegraphics[width=.51\textwidth]{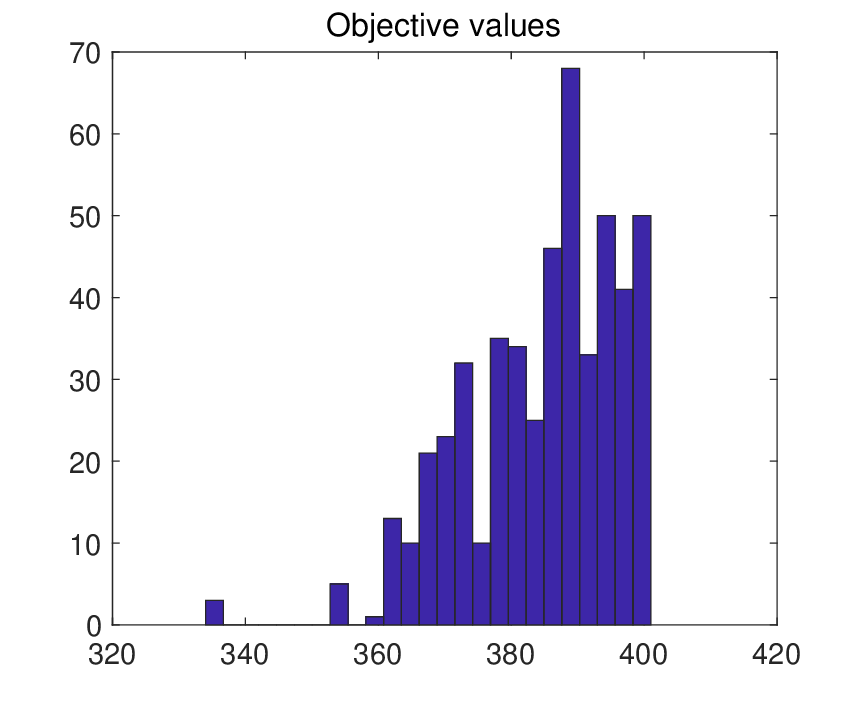}\\
\hspace{3mm} $a) \hspace{63mm} b)$ \vspace{-.3\baselineskip}
\caption{Convergence results for GPDMM for the max-cut problem.  Fig.\ (a) shows   convergence result of the primal variable, averaged over all entries of $X$ and all nodes, while (b) shows the distribution of 500 objective values for points sampled using randomised rounding.}
\label{fig:mc}
\end{figure}
In order to determine the distribution of the difference between the objective value obtained from the best rank-1 approximation and the one from randomised rounding, we run $10^2$ Monte-Carlo simulations. 
Fig.~\ref{fig:mcsim} shows the histogram of the differences. In 55$\%$ of the cases the difference is at most 1, and in 90$\%$ of the cases the difference is at most 10. The mean value of the objective values is 422.7 and the mean value of the differences  is 3.8.
\begin{figure}[t]
\centering
\includegraphics[width=.49\textwidth]{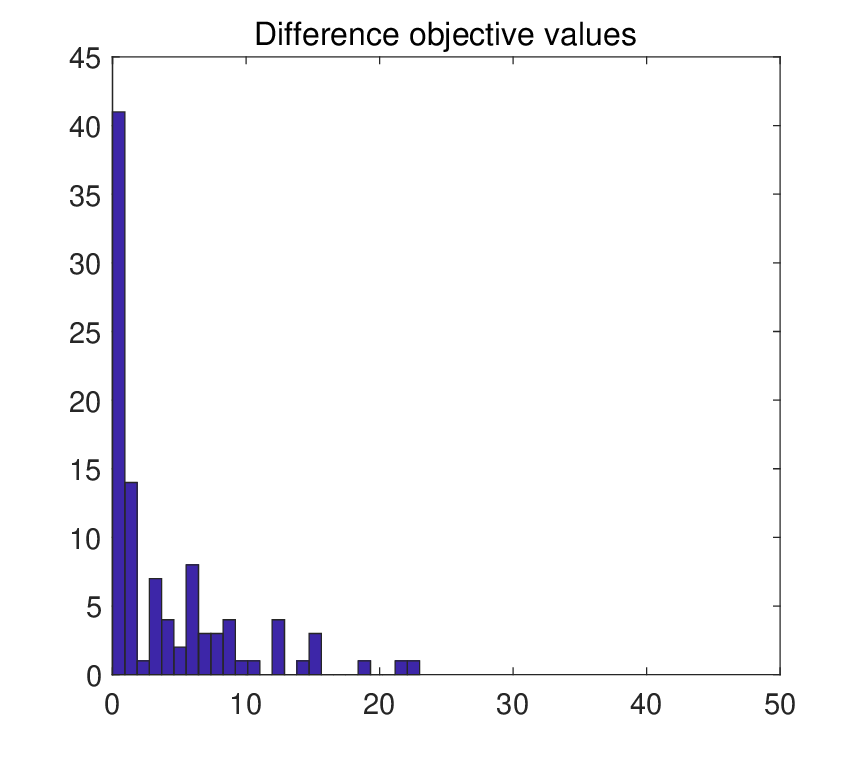}
\caption{Histogram of difference objective values.}
\label{fig:mcsim}
\end{figure}

\section{Conclusions}
\label{sec:conclusion}

In this paper we considered the problem of distributed nonlinear optimisation of a separable convex cost function over a graph subject to cone constraints. We demonstrated the extension of the primal-dual method of multipliers (PDMM) to include cone constraints. We derived update equations, using convex analysis, monotone operator theory and fixed-point theory, by applying the Peaceman-Rachford splitting algorithm to the monotonic inclusion related to the lifted dual problem. The cone constraints are implemented by a reflection method in the lifted dual domain where auxiliary variables are reflected with respect to the intersection of the polar cone and a subspace relating the dual and lifted dual domain. We derived convergence results for both synchronous and stochastic update schemes and demonstrated an application of the proposed algorithm to the maximum cut problem.  
%



%
\end{document}